\newcommand\vldbdoi{XX.XX/XXX.XX}
\newcommand\vldbpages{XXX-XXX}
\newcommand\vldbvolume{14}
\newcommand\vldbissue{1}
\newcommand\vldbyear{2020}
\newcommand\vldbauthors{\authors}
\newcommand\vldbtitle{\shorttitle} 
\newcommand\vldbavailabilityurl{URL_TO_YOUR_ARTIFACTS}
\newcommand\vldbpagestyle{plain} 
\definecolor{darkgreen}{RGB}{0,104,0}
\newtheorem{theorem}{Theorem}[section]
\newtheorem{lemma}[theorem]{Lemma}
\theoremstyle{definition}
\newtheorem{definition}{Definition}[section]
\theoremstyle{definition}
\newtheorem{example}{Example}[section]
\newcommand{\degout} {{\it{d^+_{out}} }}
\newcommand{\degin} {{\it{d^*_{in}}}}
\newcommand{\od}{\mathbb{O}}  
\newcommand{\vgr}{{\it{V^+ \setminus V^*}}}
\newcommand{\vgrratio}{{\it{|V^+| / |V^*|}}}
\newcommand{\fs}{\small} 
\newcommand{\adj}{{\it adj}}
\newcommand{\Deg}{{\it deg}} 
\newcommand{\dequeue}{{\it dequeue}}
\newcommand{\enqueue}{{\it enqueue}}
\newcommand{\core}{{\it{core}}}
\newcommand{\post}{{\it{post}}}
\newcommand{\pre}{{\it{pre}}}
\newcommand{\mcd}{{\it{mcd}}}
\newcommand{\DEG}{{\it{Deg}}} 
\newcommand{\nlb}{{\it{\#lb}}}
\newcommand{\nrp}{{\it{\#rp}}}
\begin{document}
\pagestyle{plain} 

\title{Simplified Algorithms for Order-Based Core Maintenance}
\date{\today}

\iffalse
\author{Ben Trovato}
\affiliation{%
  \institution{Institute for Clarity in Documentation}
  \streetaddress{P.O. Box 1212}
  \city{Dublin}
  \state{Ireland}
  \postcode{43017-6221}
  \country{France}
}
\email{trovato@corporation.com}

\author{Lars Th{\o}rv{\"a}ld}
\orcid{0000-0002-1825-0097}
\affiliation{%
  \institution{The Th{\o}rv{\"a}ld Group}
  \streetaddress{1 Th{\o}rv{\"a}ld Circle}
  \city{Hekla}
  \country{Iceland}
}
\email{larst@affiliation.org}

\else

\author{Bin Guo}
\affiliation{%
  \institution{McMaster University}
  \city{Hamilton}
  \state{Ontario}
  \country{Canada}
}
\email{guob15@mcmaster.ca}

\author{Emil Sekerinski}
\affiliation{%
  \institution{McMaster University}
  \city{Hamilton}
  \state{Ontario}
  \country{Canada}
}
\email{emil@mcmaster.ca}

\fi

\begin{abstract}
Graph analytics attract much attention from both research and industry communities. Due to the linear time complexity, the $k$-core decomposition is widely used in many real-world applications such as biology, social networks, community detection, ecology, and information spreading. 
In many such applications, the data graphs continuously change over time. The changes correspond to edge insertion and removal. Instead of recomputing the $k$-core, which is time-consuming, we study how to maintain the $k$-core efficiently. That is, when inserting or deleting an edge, we need to identify the affected vertices by searching for more vertices. 
The state-of-the-art order-based method maintains an order, the so-called $k$-order, among all vertices, which can significantly reduce the searching space. 
However, this order-based method is complicated for understanding and implementation, and its correctness is not formally discussed.
In this work, we propose a simplified order-based approach by introducing the classical Order Data Structure to maintain the $k$-order, which significantly improves the worst-case time complexity for both edge insertion and removal algorithms. 
Also, our simplified method is intuitive to understand and implement; it is easy to argue the correctness formally. 
Additionally, we discuss a simplified batch insertion approach.
The experiments evaluate our simplified method over 12 real and synthetic graphs with billions of vertices. 
Compared with the existing method, our simplified approach achieves high speedups up to 7.7x and 9.7x for edge insertion and removal, respectively. 
\end{abstract}

\maketitle

\pagestyle{\vldbpagestyle}
\begingroup\small\noindent\raggedright\textbf{PVLDB Reference Format:}\\
\vldbauthors. \vldbtitle. PVLDB, \vldbvolume(\vldbissue): \vldbpages, \vldbyear.\\
\href{https://doi.org/\vldbdoi}{doi:\vldbdoi}
\endgroup
\begingroup
\renewcommand\thefootnote{}\footnote{\noindent
This work is licensed under the Creative Commons BY-NC-ND 4.0 International License. Visit \url{https://creativecommons.org/licenses/by-nc-nd/4.0/} to view a copy of this license. For any use beyond those covered by this license, obtain permission by emailing \href{mailto:info@vldb.org}{info@vldb.org}. Copyright is held by the owner/author(s). Publication rights licensed to the VLDB Endowment. \\
\raggedright Proceedings of the VLDB Endowment, Vol. \vldbvolume, No. \vldbissue\ %
ISSN 2150-8097. \\
\href{https://doi.org/\vldbdoi}{doi:\vldbdoi} \\
}\addtocounter{footnote}{-1}\endgroup

\ifdefempty{\vldbavailabilityurl}{}{
\vspace{.3cm}
\begingroup\small\noindent\raggedright\textbf{PVLDB Artifact Availability:}\\
The source code, data, and/or other artifacts have been made available at \url{https://github.com/Itisben/SimplifiedCoreMaint}.
\endgroup
}

\section{Introduction} 


Given an undirected graph $G=(V,E)$, the $k$-core decomposition is to identify the maximal subgraph $G'$ in which each vertex has a degree of at least $k$; the core number of each vertex $u$ is defined as the maximum value of $k$ such that $u$ is contained in the $k$-core of $G$~\cite{bz2003,Kong2019}. 
It is well-known that the core numbers can be computed with linear running time $O(|V|+|E|)$ \cite{bz2003}.
Due to the linear time complexity, the $k$-core decomposition is easily and widely used in many real-world applications. 
In~\cite{Kong2019}, Kong et al.~summarize a large number of applications in biology, social networks, community detection, ecology, information spreading, etc. Especially in~\cite{burleson2020k}, Lesser et al.~investigate the $k$-core robustness in ecological and financial networks.

In a survey~\cite{Malliaros2020}, Malliaros et al.~summarize the main research work related to $k$-core decomposition from 1968 to 2019.
In static graphs, the computation of the core numbers has been extensively studied \cite{bz2003,cheng2011efficient,khaouid2015k,montresor2012distributed,wen2016efficient}. 
However, in many real-world applications, such as determining the influence of individuals in spreading epidemics in dynamic complex networks~\cite{miorandi2010k} and tracking the actual spreading dynamics in dynamic social media networks~\cite{pei2014searching}, the data graphs continuously change over time. The changes correspond to the insertion and deletion of edges, which may have an impact on the core numbers of some vertices in the graph. 
Graphs of this kind are called \emph{dynamic graphs}. 
When inserting or removing an edge, it is time-consuming to recalculate the core numbers of all vertices; a better approach is first to find the affected vertices and then to update their corresponding core numbers. 
The problem of maintaining the core numbers for dynamic graphs is called \emph{core maintenance}. 
To the best of our knowledge, little work is done on the $k$-core maintenance~ \cite{Saryuce2016,Zhang2017,wu2015core,sariyuce2013streaming}. 

In this work, we focus on core maintenance. More formally, given an undirected dynamic graph $G=(V,E)$, after inserting an edge into or removing an edge from $G$, the problem is how to efficiently update the core number for the affected vertices. To do this, we first need to identify a set of vertices whose core numbers need to be updated (denoted as $V^*$) by traversing a possibly larger set of vertices (denoted as $V^+$). Then it is easy to re-compute the new core numbers of vertices in $V^*$.
In~\cite{Zhang2019}, Zhang et al.~prove that the core maintenance is asymmetric: the edge removal is bounded for $V^*=V^+$, but the edge insertion is unbounded for $V^*\subseteq V^+$.
In other words, to identify $V^*$, the edge removal only needs to traverse $V^*$; however, the edge insertion may traverse a much larger set of vertices than $V^*$. 
In practice, an edge removal algorithm for core maintenance~\cite{Saryuce2016,Zhang2017} is easy to devise; but for edge insertion, it is challenging. 
Clearly, an efficient edge insertion algorithm should have a small cost for identifying $V^*$, which means a small ratio $|V^+|/|V^*|$. In this work, we mainly discuss the edge insertion algorithms for core maintenance.

In \cite{Saryuce2016}, Sariy{\"u}ce et al.~propose a traversal algorithm. This insertion algorithm searches $V^*$ only in a local region near the edge that is inserted, which can be much faster than recomputing the core numbers for the whole graph. However, this insertion algorithm has a high variation in terms of performance due to the high variation of the ratio $|V^+|/|V^*|$. 
In \cite{Zhang2017}, Zhang et al. propose an \emph{order-based approach}, which is the state-of-the-art method for core maintenance. 
The main idea is that a $k$-order is explicitly maintained among vertices such that $u \preceq v$ for every two vertices in a graph $G$. Here, a $k$-order, $(v_1 \preceq v_2 \preceq .... \preceq v_n)$, for each vertex~$v_i$ in a graph $G$, is an order that the core number determined by a core decomposition algorithm, the BZ algorithm \cite{bz2003}. 
When a new edge $(u, v)$ is inserted, the potentially affected vertices are checked with such $k$-order, by which numerous vertices are avoided to be checked.
In this case, the size of $V^+$ is greatly reduced and so that the ratio $|V^+|/|V^*|$ is typically much smaller and has less variation compared with the traversal algorithm. 
Thus the computation time is significantly improved. 

However, this order-based approach has two drawbacks.
First, the order-based edge insertion algorithm is so complicated that it is not intuitive for easy understanding. This complexity further brings difficulties to the correctness and implementation; actually, the proof of correctness for the edge-insert algorithm is not formally discussed in \cite{Zhang2017}.
Second, the $k$-order of the vertices in a graph is maintained by two specific data structures: 1) $\mathcal A $ (double linked lists combined with balanced binary search trees) for operations like inserting, deleting, comparing the order of two vertices, all of which requires worst-case $O(\log |V|)$ time; and 2) $\mathcal B$ (double linked lists combined with heaps) for searching the ordered vertices by jumping unnecessary ones, which requires worst-case $O(\log |V|)$ time; both data structures are complicated to implement.

In this work, we try to overcome the above drawbacks in \cite{Zhang2017} by proposing our simplified order-based approach. 
The idea behind our new approach is that we introduce a well-known \emph{Order Data Structure} \cite{dietz1987two,bender2002two} to maintain the $k$-order of vertices in a graph $G$. 
By doing this, there are several benefits. 
First, this classical Order Data Structure only requires amortized $O(1)$ time for order operations, including inserting, deleting, and comparing the order of two vertices; this is faster than the $\mathcal A$ data structure in \cite{Zhang2017} especially when $|V|$ is large.  
Also, the original order-based insertion algorithm can be introduced to maintain each affected vertex in $k$-order in worst-case $O(log|E^+|)$ time ($|E^+|$ is the number of edges adjacent to vertices in $V^+$); this is also faster than the $\mathcal B$ data structure in~\cite{Zhang2017} since normally we have $|E^+| \ll |V|$.
Second, compared with the method in \cite{Zhang2017}, 
when introducing the Order Data Structures and priority queues, the $\mathcal A$ and $\mathcal B$ data structures can be abandoned and so that the order-based approach can be significantly simplified; also, our new approach simplifies the proof of correctness.
Finally, our simplified order-based insertion algorithm can be easily extended to handle a batch of insertion edges without difficulties since it is common that a great number of edges are inserted or removed simultaneously; by doing this, the vertices in $V^+\setminus V^*$ are possibly avoided to be repeatedly traversed so that the total size of $V^+$ is smaller compared to unit insertion.  
The main contributions are summarized as below:
\begin{itemize} [noitemsep,topsep=0pt] 
    \item We investigate the drawbacks of the state-of-the-art order-based core maintenance algorithms in \cite{Zhang2017}. 
    
    \item Based on \cite{Zhang2017}, by introducing the Order Data Structure \cite{dietz1987two,bender2002two}, we propose a simplified order-based insertion algorithm. 
    Not only can the worst-case time complexity be improved, but also the proof of correctness is simplified.

    \item We extend our simplified core insertion algorithm to handle a batch of edges, with smaller size of $V^*$ compared to unit insertion.
    
    \item Finally, we conduct extensive experiments with different kinds of real data graphs to evaluate different algorithms.
\end{itemize}

The rest of this paper is organized as follows. The preliminaries are given in Section 2. 
The original order-based algorithm is reviewed in Section 3. Our simplified order-based insertion and removal algorithms are proposed in Section 4. Our simplified order-based batch insertion is proposed in Section 5. 
Related work is discussed in Section 6. We report on extensive performance studies in Section 7 and conclude in Section 8.

\section{Preliminaries} \label{Preliminaries}

Let $G = (V, E)$ be an undirected unweighted graph, where $V(G)$ denotes the set of vertices and $E(G)$ represents the set of edges in $G$. When the context is clear, we will use $V$ and $E$ instead of $V(G)$ and $E(G)$ for simplicity, respectively.
Note that, as $G$ is an undirected graph, an edge $(u, v)\in E(G)$ is equivalent to $(v, u)\in E(G)$. 
We denote the number of vertices and edges of $G$ by $n$ and $m$, respectively. We define the set of neighbors of a vertex $u \in V$ as $u.\adj$, formally $u.\adj = \{v \in V: (u, v) \in E\}$.
We denote the degree of $u$ in $G$ as $u.\Deg = |u.\adj|$.
Also, to analyze the time complexity, we denote the maximal degree among all vertices in $G$ as $\DEG(G)=\max\{v\in V(G): v.\Deg\}$.
We say a graph $G'$ is a subgraph of $G$, denoted as $G'\subseteq G$, if $V(G') \subseteq V(G)$ and $E(G') \subseteq E(G)$. 
Given a subset $V' \subseteq V$, the subgraph induced by $V'$, denoted as $G(V')$, is defined as $G(V') = (V', E')$ where $E' = \{(u, v) \in E : u, v \in V'\}$.

\begin{definition} [$k$-Core]
Given an undirected graph $G=(V, E)$ and an integer $k$, a subgraph $G_k$ of $G$ is called a $k$-core if it satisfies the following conditions: (1) for $\forall u \in V(G_k)$, $u.\Deg \geq k$; (2) $G_k$ is maximal. Moreover, $G_{k+1} \subseteq G_k$, for all $k \geq 0$, and $G_0$ is just $G$. 
\end{definition}

\begin{definition}[Core Number]
\label{def:corenumber}
Given an undirected graph $G=(V,E)$, the core number of a vertex $u\in G(V)$, denoted as $u.\core$, is defined as $u.\core = max\{k: u \in V(G_k)\}$. That means $u.core$ is the largest $k$ such that there exists a $k$-core containing $u$.
\end{definition}

\begin{definition} [Subcore]
\label{def:subcore}
Given an undirected graph $G=(V,E)$, a maximal set of vertices $S\subseteq V$ is called a $k$-subcore if (1) $\forall u \in S, u.\core = k$; (2) the induced subgraph $G(S)$ is connected. The subcore that contains vertex $u$ is denoted as \texttt{sc}$(u)$.  
\end{definition}

\paragraph{Core Decomposition} 
Given a graph $G=(V,E)$, the problem of computing the core number for each $u \in V(G)$ is called core decomposition. In \cite{bz2003}, Batagelj and Zaversnik propose an algorithm with a liner running time of $O(m+n)$, the so-called BZ algorithm. 
The general idea is the \emph{peeling process}. That is, to compute the $k$-core $G_k$ of $G$, the vertices (and their adjacent edges) whose degrees are less than $k$ are repeatedly removed. When there are no more vertices to remove, the resulting graph is the $k$-core of $G$.

\begin{algorithm}[!htb]
\caption{BZ algorithm for core decomposition}
\label{alg:bz}
\small
\SetAlgoNoEnd
\DontPrintSemicolon
\SetKwInOut{Input}{input}\SetKwInOut{Output}{output}
\Input{an undirected graph $G=(V,E)$}
\Output{the core number $u.\core$ for each $u\in V$}
    
    \lFor{$u \in V$}{ 
        $u.d \gets |u.\adj|$; $u.core = \varnothing$}
    $Q\gets$ a min-priority queue by $u.d$ for all $u\in V$\;
    
    \While{$Q \neq \emptyset$}{
        $u \gets Q.\dequeue ()$\;
       $u.\core \gets u.d$; remove $u$ from $G$\; \label{alg:bz-core}
       \For{$v \in u.\adj$}{
            \lIf{$u.d < v.d$}{$v.d \gets v.d - 1$}
        }
        
        update $Q$\;
    }
\end{algorithm}

Algorithm \ref{alg:bz} shows the steps of the BZ algorithm. In initialization, for each vertex $u\in V$, the auxiliary degree $u.d$ is set to $|u.\adj|$ and the core number $u.core$ is not identified (line 1). The postcondition is that for each vertex $u\in V$, the $u.d$ equals to the core number, formally $u.d = u.\core$. We state informally lines 3 - 8 as a loop invariant: (1) the vertex $u$ always has the minimum degree $u.d$ since $u$ is removed from the min-priority queue $Q$ (line 4); and (2) if $u$ obtains its core number, $u.core$ equals to $u.d$ (line 5). 
The key step is updating $v.d$ for all $v\in u.adj$. That is, $v.d$ are decremented by $1$ if $u.d$ is smaller than $v.d$ (lines 6 and 7). 
In this algorithm, the min-priority queue $Q$ can be efficiently implemented by bucket sorting \cite{bz2003}, by which the total running time is optimized to linear $O(m+n)$.

\paragraph{Core Maintenance}
The problem of maintaining the core numbers for dynamic graphs $G$ is called core maintenance, when edges are inserted into and removed from $G$ continuously. The insertion and removal of vertices can be simulated as a sequence of edge insertions and removals. 
Hence, in this paper, we focus on maintaining the core numbers when an edge is inserted into or removed from a graph $G$.

\begin{definition}[Candidate Set $V^*$ and Searching Set $V^+$]
\label{df:set}
Given an undirected graph $G=(V,E)$, when an edge is inserted or removed, a candidate set of vertices, denoted as $V^*$, have to be computed so that the core numbers of all vertices in $V^*$ must be updated. In order to identify $V^*$, a minimal set of searching vertices, denoted as $V^+$, is traversed by repeatedly accessing their adjacent edges.
\end{definition}

Definition \ref{df:set} says that $V^*$ is identified by traversing all vertices in $V^+$, so that $V^*$ has to belong to $V^+$, denoted as $V^* \subseteq V^+$. Further, the vertices in $\vgr$ are traversed but not candidate vertices.
Efficient core maintenance algorithms should have a small ratio of $\vgrratio$ in order to minimize the cost of computing $V^*$. After $V^*$ is identified, the core number of vertices in $V^*$ can be updated accordingly.

\begin{figure}[htb]
\centering
\includegraphics[scale=0.4]{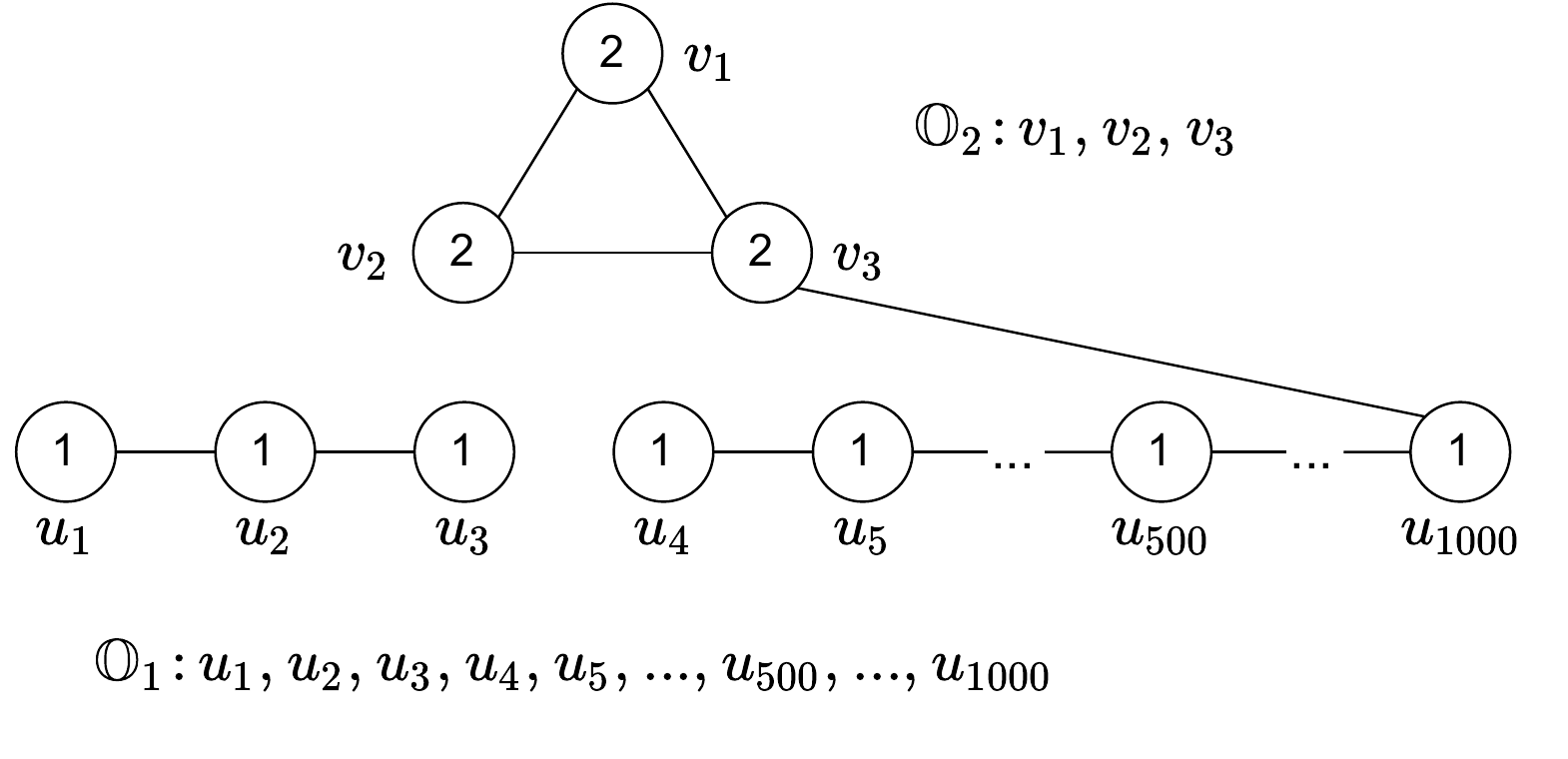}
\centering
\caption{A sample graph $G$ with $\od=\od_1\od_2$ in $k$-order.}
\label{fig:order}
\end{figure}

\begin{example}
Consider the graph $G$ in Figure \ref{fig:order}. The numbers inside the vertices are the core numbers. Three vertices, $v_1$ to $v_3$, have same core numbers of $2$; the other vertices, $u_1$ to $u_{1000}$, have same core numbers of $1$.
The whole graph $G$ is the $1$-core of since each vertex has a degree of at least $1$; the subgraph induced by $\{v_1, v_2, v_3\}$ is the $2$-core since each vertex in this subgraph has a degree of at least $2$.
After inserting an edge, for example $(u_1, u_{500})$, we observe that the core numbers of all vertices are not changed according to the peeling process. In this case, the candidate set $V^*=\emptyset$. However, the searching set $V^+$ is different for different edge insertion method, e.g., the order-based algorithm may have $V^+=\{u_1,u_2,u_3\}$ and the traversal algorithm traverse all vertices in $\texttt{sc}(u_1)$ with $V^+=\{u_1, u_2,\dots,u_{1000}\}$. 
\end{example}


We present two theorems given in \cite{li2013efficient,Saryuce2016,Zhang2017} which are useful to discuss the correctness of our insertion and removal algorithms. 

\begin{theorem}
\label{thr:core-number}
\rm{\cite{li2013efficient,Saryuce2016,Zhang2017}} After inserting an edge in or removing an edge from $G = (V, E)$, the core number of a vertex $u \in V^*$ increases or decreases by at most 1, respectively.
\end{theorem}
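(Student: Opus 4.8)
The plan is to prove the insertion and removal cases separately, and in each case to use the peeling-process characterization of core numbers together with the $k$-core containment $G_{k+1} \subseteq G_k$ from the first definition. The core number $u.\core$ equals the largest $k$ for which $u$ survives the peeling process that repeatedly deletes vertices of degree less than $k$. Since inserting or removing a single edge changes the degree of exactly two vertices by exactly $1$, I expect the effect on any core number to be at most $1$, and the job is to make this intuition rigorous.

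First I would handle the insertion of an edge $(u,v)$, turning $G$ into $G'$. The key monotonicity fact is that adding an edge can only help vertices survive the peeling process, so for every vertex $w$ we have $w.\core \le w.\core'$, where primes denote values in $G'$; this follows because the $k$-core of $G$ is still a subgraph in which every vertex has degree at least $k$ in $G'$, hence is contained in the $k$-core of $G'$. For the upper bound $w.\core' \le w.\core + 1$, I would argue by contradiction: suppose some vertex $w$ has $w.\core' \ge w.\core + 2 =: k+2$. Consider the $k+2$-core $H'$ of $G'$ containing $w$. Deleting the single inserted edge from $H'$ lowers the degree of at most two vertices by $1$, so every vertex of $H'$ still has degree at least $k+1$ in $H' \setminus \{(u,v)\} \subseteq G$, which exhibits a subgraph of $G$ with minimum degree $k+1$ containing $w$, forcing $w.\core \ge k+1$, a contradiction. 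Combining the two bounds gives $w.\core \le w.\core' \le w.\core + 1$, i.e.\ an increase of at most $1$.

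The removal case is symmetric: deleting edge $(u,v)$ from $G$ is the same as inserting it into the smaller graph, so the result follows from the insertion case by exchanging the roles of $G$ and $G'$. Concretely, running the insertion argument backward shows that every core number decreases, and decreases by at most $1$.

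The main obstacle I anticipate is stating the minimum-degree/subgraph characterization cleanly enough that the ``delete one edge, lose at most one unit of degree'' step is airtight. In particular I need the standard equivalence that $w.\core \ge k$ if and only if $w$ lies in some subgraph of $G$ with minimum degree at least $k$; this is immediate from the peeling process but should be invoked explicitly, since the whole contradiction argument hinges on passing from the $(k+2)$-core of $G'$ to a minimum-degree-$(k+1)$ subgraph of $G$ after removing the inserted edge. Once that equivalence is in hand, the degree bookkeeping is routine and the two bounds fit together with no further difficulty.
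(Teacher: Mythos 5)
Your argument is correct. Note that the paper itself gives no proof of this theorem: it is stated as a known result imported from the cited prior work, so there is no in-paper argument to compare against. Your proof is the standard one and is sound: the monotonicity direction follows from the fact that the $k$-core of $G$ remains a subgraph of $G'$ of minimum degree at least $k$, and the upper bound follows by deleting the single inserted edge from the $(k+2)$-core of $G'$, which lowers at most two degrees by one and exhibits a subgraph of $G$ of minimum degree at least $k+1$. The one ingredient you rightly flag as needing explicit statement --- that $w.\core \geq k$ iff $w$ lies in \emph{some} subgraph of minimum degree at least $k$ --- is justified by the maximality clause in the paper's Definition of $k$-core (the union of subgraphs of minimum degree $\geq k$ again has minimum degree $\geq k$, so every such subgraph sits inside the unique maximal one). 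The reduction of the removal case to the insertion case by swapping $G$ and $G'$ is also fine, and your proof in fact establishes the bound for all vertices, which subsumes the statement restricted to $V^*$.
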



\begin{theorem}
\label{thr:core-range}
\rm{\cite{li2013efficient,Saryuce2016,Zhang2017}} Suppose an edge $(u, v)$ with $K =u.\core\leq v.\core$ is inserted to (resp. removed from) $G$. Suppose $V^*$ is non-empty. We have the following: 
(1) if $u.\core < v.\core$, then $u\in V^*$ and $V^* \subseteq \texttt{sc}(u)$ (as in Definition \ref{def:subcore}); 
(2) if $u.\core = v.\core$, then both vertices $u$ and $v$ are in $V^*$ (resp. at least one of $u$ and $v$ is in $V^*$) and $V^* \subseteq \texttt{sc}(u) \cup \texttt{sc}(v)$;
(3) the induced subgraph of $V^* \in G \cup \{(u, v)\}$ is connected.
\end{theorem}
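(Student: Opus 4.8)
The plan is to reduce all three parts to one clean structural fact: an update of an edge whose smaller endpoint has core number $K$ perturbs exactly one $k$-core, namely $G_{K+1}$ under insertion and $G_{K}$ under removal, while every other $k$-core is left intact. Once this localization is in hand, the candidate set is simply the symmetric difference of that single core before and after the update, and all three claims follow from comparing the two cores. Throughout I write $G'$ for the updated graph and $G_k, G'_k$ for the $k$-core of $G$ and of $G'$. By Theorem~\ref{thr:core-number} every core number moves by at most $1$, so a vertex belongs to $V^*$ precisely when the single core on whose boundary it sits changes.

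First I would prove the localization. For insertion: if $j \le K$ then $u,v \in V(G_j)$ already, and adding an edge between two vertices of $G_j$ raises no other degree, so $G'_j = G_j$; if $j \ge K+2$ then $u.\core \le K+1 < j$ by Theorem~\ref{thr:core-number}, so $u$ lies outside $G'_j$, and since removing $u$ also deletes the new edge the computation takes place inside $G'-u = G-u$, giving $G'_j = G_j$. Hence only $G_{K+1}$ can grow, every vertex of $V^*$ has core number exactly $K$, and $V^* = V(G'_{K+1}) \setminus V(G_{K+1})$. The removal case is dual: for $j \ge K+1$ the endpoint $u \notin V(G_j)$, so $(u,v)$ is not an edge of $G_j$ and deleting it leaves $G_j$ fixed; for $j \le K-1$ both endpoints lie in $G_{j+1}$, hence have degree $\ge j+1$ inside $G_j$, so deleting one incident edge keeps them at degree $\ge j$ and $G_j$ is fixed. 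Thus only $G_K$ can shrink and $V^* = V(G_K) \setminus V(G'_K)$.

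Next I would pin down the endpoints by a ``the changed edge must be used'' argument. For insertion with $u.\core < v.\core$ we have $v \in V(G_{K+1})$ already, so if $u \notin V(G'_{K+1})$ the new edge is non-internal to $G'_{K+1}$, making $G'_{K+1}$ a min-degree-$(K+1)$ subgraph of $G$ and forcing $G'_{K+1} \subseteq G_{K+1}$, i.e.\ $V^* = \emptyset$; hence $u \in V^*$. For insertion with $u.\core = v.\core$ the same non-internality shows that neither ``neither endpoint promoted'' nor ``exactly one promoted'' is possible, so $V^* \ne \emptyset$ forces both $u,v \in V^*$. For removal, if no endpoint left $G_K$ then deleting $(u,v)$ would leave $G[V(G_K)]$ at minimum degree $\ge K$, whence $G_K \subseteq G'_K$ and $V^* = \emptyset$; so at least one endpoint lies in $V^*$, and when $u.\core < v.\core$ the endpoint $v$ has degree $\ge K+1$ in $G_K$ and cannot be the one that drops, pinning it on $u$.

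The hard part will be connectedness, claim~(3); the subcore containments then follow immediately, since a connected vertex set all of whose core numbers equal $K$ lies in one component of the core-$K$ subgraph through each endpoint it contains, which is exactly $\texttt{sc}(u)$ in case~(1) and $\texttt{sc}(u) \cup \texttt{sc}(v)$ in case~(2). I would prove it by a cut-and-paste contradiction: suppose the induced subgraph on $V^*$ splits as $A \sqcup B$, placing every update-endpoint that lies in $V^*$ into $A$ (consistent, since whenever both endpoints are in $V^*$ the update edge itself joins them into one component). Then the update edge never runs between $A$ and $B$, so no vertex of $B$ is touched by the update and none of its relevant neighbours lies in $A$. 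For insertion this makes $V(G_{K+1}) \cup B$ a min-degree-$(K+1)$ subgraph already in $G$, forcing $B \subseteq V(G_{K+1})$ and hence $B = \emptyset$; for removal it makes $V(G_K) \setminus A$ a min-degree-$K$ subgraph in $G'$, forcing $B \subseteq V(G'_K)$ and again $B = \emptyset$. The delicate point to get right is exactly this: once $B$ is separated from the endpoints, the update leaves its degrees untouched, so $B$ was already a valid core in the other graph --- this is what turns ``$B$ is a separate component'' into ``$B$ never changed'', and it is where the insertion and removal directions must be argued on their respective graphs.
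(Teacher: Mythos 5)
The paper never proves this theorem --- it is imported verbatim from the cited works --- so there is no internal proof to compare yours against; judged on its own, your argument is essentially correct and self-contained. The localization step (only $G_{K+1}$ can grow under insertion, only $G_K$ can shrink under removal, hence $V^*$ is the symmetric difference of a single core and every vertex of $V^*$ has core number exactly $K$) is the right skeleton, the ``non-internal edge'' argument correctly pins down the endpoints, and the cut-and-paste argument for connectedness is sound: a component $B$ of the induced subgraph on $V^*$ that avoids both endpoints has all of its core-relevant neighbours inside $V(G_{K+1})\cup B$ (resp.\ $V(G_K)\setminus A$) and is untouched by the updated edge, so it was already in the old core (resp.\ survives in the new one), forcing $B=\emptyset$. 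One step is stated too quickly: in the removal case you assert that if neither endpoint leaves $G_K$ then $G'[V(G_K)]$ has minimum degree $\geq K$. Pure degree counting only gives $\geq K-1$ at $u$ and $v$, since each loses one incident edge; what rescues the claim is precisely the hypothesis $u,v\in V(G'_K)$, which guarantees that $u$ and $v$ retain degree $\geq K$ already inside the smaller set $V(G'_K)\subseteq V(G_K)$. That sentence should be spelled out. Similarly, in case (1) for removal, the cleanest way to see $v\notin V^*$ is your own localization (every vertex of $V^*$ has $\core$ exactly $K$ in $G$, while $v.\core>K$), rather than the degree sketch you give. With those two clarifications the proof is complete.
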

 
Theorem \ref{thr:core-range} suggests that: (1) $V^*$ only includes the vertices $u\in V$ with $u.\core = K$; (2) $V^*$ can be searched in a small local region near the inserted or removed edge rather than in a whole graph. That is, to identify $V^*$, all vertices in $V^+$ are located in the subcores containing $u$ and $v$.

\paragraph{Order Data Structure}
The well-know \emph{Order Data Structure} \cite{dietz1987two, bender2002two} maintains a total order of subjects by following operations:
\begin{itemize}[noitemsep,topsep=0pt] 
    \item $\texttt{ORDER}(\od, x, y)$: determine if $x$ precedes $y$ in the total order~$\od$.
    \item $\texttt{INSERT}(\od, x, y)$: insert a new item $y$ after $x$ in the total order~$\od$.
    \item $\texttt{DELETE}(\od, x)$: remove an item $x$ from the total order $\od$.
\end{itemize}

In \cite{dietz1987two, bender2002two}, it is proved that all above three operations require worst-case $O(1)$ running time with linear space. 
The main idea is that each item $x$ in the total order is assigned a label to indicate the order. In this way, an $\texttt{ORDER}$ operation only requires $O(1)$ time for label comparisons and a $\texttt{DELETE}$ operation only require $O(1)$ time for directly removing one item without affecting the labels of other items.
Significantly, an $\texttt{INSERT}(\od, x, y)$ is complicated: 1) if there exists a valid label between two items $x$ and $x$'s successor, the new item $y$ can be inserted between them by assigning a new label, which requires $O(1)$ time; 2) or else, a \emph{relabel operation} is triggered to rebalance the labels for adjacent items, which requires $O(1)$ amortized running time.

In this work, our simplified order-based core maintenance algorithms are based on this Order Data Structure. Our time complexity analysis is based on the $O(1)$ time for the above three order operations.

\section{The Order-Based Algorithm}
In this section, we discuss the state-of-the-art order-based core maintenance approach in \cite{Zhang2017}. This algorithm is based on the $k$-order, which can be generated by the BZ algorithm for core decomposition \cite{bz2003} as in Algorithm \ref{alg:bz}. The $k$-order is defined as follows.


\begin{definition}[$k$-Order $\preceq$]
\rm{\cite{Zhang2017}}
Given a graph $G$, the $k$-order $\preceq$ is defined for any pairs of vertices $u$ and $v$ over the graph $G$ as follows: (1) when $u.\core < v.\core$, $u \preceq v$; (2) when $u.\core = v.\core$, $u \preceq v$ if $u$'s core number is determined before $v$'s by BZ algorithm (Algorithm \ref{alg:bz}, line \ref{alg:bz-core}). 
\end{definition}

A $k$-order $\preceq$ is an instance of all the possible vertex sequences produced by Algorithm~\ref{alg:bz}. For the $k$-order, transitivity holds, that is, $u\preceq v $ if $u\preceq w \land w\preceq v$. For each edge insertion and removal, the $k$-order will be maintained.  

Here, $\od_k$ denotes the sequence of vertices in $k$-order whose core numbers are $k$. 
A sequence $\od = \od_0\od_1\od_2 \cdots$ over $V(G)$ can be obtained, where $\od_i \preceq \od_j$ if $i < j$. It is clear that $\preceq$ is defined over the sequence of $\od = \od_0\od_1\od_2\cdots$. In other words, for all vertices in graph, the sequence $\od$ indicates the $k$-order $\preceq$. 


\begin{example}
Continually consider the graph $G$ in Figure \ref{fig:order}. The numbers inside the vertices are the core numbers. 
The $k$-order of $G$ is shown by $\od_1$ and $\od_2$, which is the order of core numbers determined by the BZ algorithm (Algorithm \ref{alg:bz} line~5); also, $\od_1$ is determined before $\od_2$, so that we have $\od_1\preceq \od_2$. 
\end{example}

\subsection{The Order-Based Insertion}
The key step for the insertion algorithm is to determine $V^*$.  
To do this, two degrees, $u.d^+$ and $u.d^*$, for each vertex $u\in V(G)$ are maintained in order to identify whether $u$ can be added into $V^*$ or not:
\begin{itemize} [noitemsep,topsep=0pt]
    \item remaining degree $u.d^+$: the number of the neighbors after vertex $u$ in $\od$ that can potentially support the increment of the current core number. 
    
    \item candidate degree $u.d^*$: the number of the neighbors before vertex $u$ in $\od$ that can potentially have their core number increased. 
\end{itemize}

Assume that an edge $(u, v)$ is inserted with $K=u.\core\leq v.\core$. 
The intuition behind the order-based insertion algorithm is as follows.
Starting from $u$, all affected vertices with the same core number $K$ (Theorem \ref{thr:core-range}) are traversed in $\od$. 
For each visited vertex $w\in V^+$, the value of $w.d^*+w.d^+$ is maximal as $w$ is visited by $k$-order. 
In this case, $w$ will be added into $V^*$ if $w.d^*+w.d^+ > K$; otherwise, $w$ is impossibly in $V^*$, which may repeatedly cause other vertices to be removed from $V^*$.
When all vertices with core number $K$ are traversed, this process terminates and $V^*$ is identified. 
Finally, the core numbers for all vertices in $V^*$ are updated by increasing by~1 (Theorem \ref{thr:core-number}). Obviously, for all vertices $u\in V$, the order $\od$ along with $u.d^+$ and $u.d^*$ must be maintained accordingly. 

Compared with the Traversal insertion algorithm \cite{Saryuce2016}, the benefit of traversing with $k$-order is that a large number of unnecessary vertices in $V^+\setminus V^*$ can be avoided. This is why the order-based insertion algorithm is generally more efficient. 


The order-based insertion algorithm is not easy to implement as it needs to traverse the vertices in $\od$ efficiently.
There are three cases.
First, given a pair of vertices $u, v\in \od_k$, the order-based insertion algorithm needs to efficiently test whether $u\preceq v$ or not.
For this, $\od_k$ is implemented as a double linked list associated with a data structure $\mathcal{A}_k$ which is a binary search tree and each tree node holds one vertex. 
For all $u, v \in \od_k$, we can test the order $u\preceq v$ in $O(\log |\od_k|)$ time by using $\mathcal{A}_k$. 
Second, the order-based insertion algorithm needs to efficiently ``jump'' over a large number of non-affected vertices that have $u.d^*=0$. To do this, $\od_k$ is also associated with a data structure $\mathcal B$, which is a min-heap.
Here, $\mathcal B$ supports finding a affected vertex $u$ with $u.d^*>0$ sequentially in $\od_k$ with $O(1)$ time; but it requires $O(\log|\od_k|)$ time to maintain the min-heap. 
Therefore, when maintaining $\od$, both $\mathcal A$ and $\mathcal B$ requires to updated accordingly, which requires worst-case $O(|V^+| \cdot \log|\od_k| + O(|V^*|)\log|\od_{k+1}|)$ time for removing $v\in V^*$ from $\od_k$ and then inserting $v\in V^*$ at the head of $\od_{k+1}$. 

As we can see, the $\mathcal A$ and $\mathcal B$ data structures are complicated, which complicates understanding and implementation. Additionally, the operations on $\mathcal A$ and $\mathcal B$ are time-consuming, especially when handling a data graph with a large sizes of $\od_k$ or $\od_{k+1}$.


\subsection{The Order-Based Removal}

The order-based removal algorithm adopts the same routine used in the traversal removal algorithm \cite{Saryuce2016} to compute $V^*$. This order-based removal algorithm is based on the \emph{max-core degree}.
\begin{definition} [max-core degree $\mcd$] \cite{Saryuce2016,Zhang2017}
Given a graph $G=(V,E)$, for each vertex $v\in V$, the max-core degree, $v.\mcd$, is the number of $v$'s neighbors $w$ such that $w.core\geq v.core$, defined as $v.\mcd = |\{w\in v.\adj: w.\core \geq v.\core \}|$.
\end{definition}

As discussed, the edge removal is much simpler than the edge insertion since edge removal is bounded for $V^*=V^+$.   
Assuming an edge $(u, v)$ is removed from the graph, both $u.\mcd$ and $v.\mcd$ are updated accordingly. 
This may repeatedly affect other adjacent vertices' $\mcd$. 
When the process terminates, all affected vertices $u$ that have $u.\mcd<u.\core$ can be added into $V^*$ and then their core numbers are off by $1$. 
Obviously, for all vertices $u\in V^*$, the sequence $\od$ along with $u.\mcd$ must be maintained accordingly.


Compared with the Traversal removal algorithm, the difference is that the order-based removal algorithm needs to maintain
$\od$ for all vertices in $V^*$. That is, all vertices in $V^*$ with core number $k$ are deleted from $\od_k$ and then appended to $\od_{k-1}$ in the corresponding $k$-order. Recall that two associated data structures, $\mathcal A$ and $\mathcal B$, are used for the order-based insertion algorithm. Both $\mathcal A$ and $\mathcal B$ must be updated accordingly, which requires worst-case $O(|V^*|\cdot (\log |O_k| + \log |O_{k-1}|))$ time for removing $v\in V^*$ from $\od_k$ and appending $v\in V^*$ at the tail of $\od_{k-1}$.
Analogously to the order-based insertion, the operations on $\mathcal A$ and $\mathcal B$ are time-consuming when handling a data graph with a large size of $\od_k$ or $\od_{k-1}$.

\section{The Simplified Order-Based Algorithm}
The main reason for the order-based algorithm being complicated and inefficient is that two data structures, $\mathcal A$ and $\mathcal B$, are used to maintain $\od$ in $k$-order for all vertices in a graph. 
In this section, we adopt the Order Data Structure \cite{dietz1987two, bender2002two} to maintain the $k$-order for all vertices. There are two benefits: one is that the $k$-order operations, such as inserting, deleting, and comparing the order of two vertices, can be optimized to $O(1)$ amortized running time; the other is that the original order-based method \cite{Zhang2017} can be simplified, which makes it easier to implement and to discuss the correctness.

Before introducing the new method, we propose a \emph{constructed Directed Acyclic Graph (DAG)} to simplify the statement of our algorithms. 
Given an undirected graph $G=(V,E)$ with $\od$ in $k$-order, each edge $(u, v)\in E(G)$ can be assigned a direction such that $u \preceq v$.
By doing this, a \emph{direct acyclic graph} (DAG) $\vec G=(V,\vec E)$ can be constructed where each edge $u\mapsto v\in \vec E(\vec G)$ satisfies $u\preceq v$. 
Of course, the $k$-order of $G$ is the \emph{topological order} of $\vec G$.
The \emph{post} of a vertex $v$ in $\vec G(V, \vec E)$ is all its successors (outgoing edges), defined by $u(\vec G).post=\{v \mid u\mapsto v \in \vec E (\vec G)\}$;
the \emph{pre} of a vertex $v$ in $\vec G(V, \vec E)$ is all its its predecessors (incoming edges), defined by $u(\vec G).pre=\{v \mid v\mapsto u \in \vec E (\vec G)\}$.
When the context is clear, we use $u.post$ instead of $u(\vec G).post$ and $u.pre$ instead of $u(\vec G).pre$.

In other words, the constructed DAG $\vec G=(V, \vec E)$ is equivalent to the undirected graph $G(V, E)$ by associating the direction for each edge in $k$-order. 
This newly defined constructed DAG $\vec G$ is convenient for describing our simplified order-based insertion algorithm.

\begin{lemma}
\label{lm:outdegree}
Given a constructed DAG $\vec G=(V,\vec E)$, for each vertex $v\in V$, the out-degree $|v.post|$ is not greater than the core number, $|v.\post| \leq v.\core$.
\end{lemma}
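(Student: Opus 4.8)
The plan is to read both sides of the inequality off a single execution of the BZ algorithm (Algorithm~\ref{alg:bz}) that realizes the $k$-order $\od$: by construction the orientation of $\vec G$ is exactly the order in which vertices are dequeued, so $v.\post$ is precisely the set of neighbours of $v$ removed \emph{after} $v$. The bridge to the core number is line~\ref{alg:bz-core}, where $v.\core$ is fixed to the current auxiliary degree $v.d$ at the moment $v$ is removed. It therefore suffices to show that, at that moment, $v.d \geq |v.\post|$.

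First I would partition $v.\adj$. Since vertices leave the queue one at a time, every neighbour $w$ of $v$ is dequeued either strictly before or strictly after $v$; the former are exactly $v.\pre$ (edges $w\mapsto v$) and the latter exactly $v.\post$ (edges $v\mapsto w$), so $|v.\adj| = |v.\pre| + |v.\post|$.

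Next I would track the counter $v.d$, which is initialized to $|v.\adj|$ and only ever decremented, by one, in lines~6 and~7 when some neighbour $u$ of $v$ is removed. Only decrements applied before $v$ itself is removed affect the recorded value $v.\core = v.d$, and each such decrement is charged to a distinct neighbour $u$ removed before $v$, i.e.\ a vertex of $v.\pre$. Hence at most $|v.\pre|$ decrements are applied, so $v.d \geq |v.\adj| - |v.\pre| = |v.\post|$; combined with $v.\core = v.d$ this yields $|v.\post| \leq v.\core$.

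The one place requiring care — and the only point where the guard of line~7 enters — is the bound ``at most $|v.\pre|$ decrements''. Because the decrement is gated by the test $u.d < v.d$, some earlier-removed neighbours may not decrement $v.d$ at all; but a missing decrement only makes $v.d$ larger, so it cannot break the inequality. I therefore never need the exact value of $v.d$: the weak estimate ``at most one decrement per earlier neighbour'' suffices and is insensitive to the guard. I expect this observation — that the conditional decrement can only help — to be the main thing a careful reader would want spelled out.
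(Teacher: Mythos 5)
Your proof is correct and follows essentially the same route as the paper's: both read the inequality off the execution of the BZ algorithm that realizes the $k$-order, observing that when $v$ is dequeued every vertex of $v.\pre$ has already been removed, so the value $v.\core = v.d$ recorded at that moment is at least $|v.\post|$. Your explicit accounting for the conditional decrement in line~7 (at most one decrement per earlier-removed neighbour, and a skipped decrement only increases $v.d$) is the careful version of the step the paper states only tersely as a contradiction.
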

\begin{proof}
Since the topological order of $\vec G$ is the $k$-core of $G$, when removing the vertex $v$ by executing the BZ algorithm (Algorithm \ref{alg:bz}, line \ref{alg:bz-core}) all the vertices in $v.pre$ are already removed. In such a case, the out-degree of $v$ is its current degree. If there exist $|v.\post|>v.\core$ the value $v.\core$ should equal to $|v.\post|$, which leads to a contradiction. 
\end{proof}

If inserting an edge into a constructed DAG $\vec G$ does not violate Lemma \ref{lm:outdegree}, no maintenance operations are required. Otherwise, $\vec G$ has to be maintained to re-establish Lemma~\ref{lm:outdegree}. 

Table 1 summarizes the notations that will be frequently used when describing the algorithm. 
\begin{table}[htb]
\label{tb:notations}
\caption{Notations.}
\begin{tabular}{r|l}
\toprule
Notation &  Description\\
\midrule
$G=(V,E)$              &  an undirected graph\\
$\vec G = (V, \vec E)$ &   an constructed DAG by the $k$-order\\ 
$u\mapsto v \in \vec E (\vec G)$  & a directed edge in an constructed DAG \\
$\od=\od_0\od_1\dots\od_k$      & a sequence indicates the $k$-order $\preceq$\\
$u(\vec G).\degin$             &  the remaining in-degree of $u$\\ 
$u(\vec G).\degout$            &  the candidate out-degree of $u$\\ 
$u(\vec G).\post$        & the successors of $u$ in $\vec G$\\
$u(\vec G).\pre$         & the predecessor of $u$ in $\vec G$\\
$u.\mcd$              & the max-core degree of $u$\\
$u.\core$             & the core number of $u$\\
$V^*$                   & candidate set\\
$V^+$                   & searching set\\
$\Delta \vec G=(V, \Delta \vec E)$ & {an inserted graph}\\
\bottomrule
\end{tabular}
\end{table}

\subsection{The Simplified Order-Based Insertion}
\paragraph{Theory Background} With the concept of the constructed DAG $\vec G$, we can introduce our simplified insertion algorithm to maintain the core numbers after an edge is inserted to $\vec G$. For convenience, based on the constructed DAG $\vec G$, we first redefine the two concepts of candidate degree and remaining degree as in \cite{Zhang2017}. 

\begin{definition} [candidate in-degree]
Given a constructed DAG $\vec G(V,\vec E)$, the candidate in-degree $v.\degin$ is the total number of its predecessors located in $V^*$, denoted as $$v.\degin = |\{w\in v.pre: w\in V^*\}|$$
\end{definition}

\begin{definition} [remaining out-degree]
Given a constructed DAG $\vec G(V,\vec E)$, the remaining out-degree $v.\degout$ is the total number of its successors without the ones that are confirmed not in $V^*$, denoted as $$v.\degout = |\{w\in v.post: w\notin \vgr\}|$$
\end{definition}

In other words, assuming that $K=v.core$, the candidate in-degree $v.\degin$ counts the number of predecessors that are already in the new $(K+1)$-core; $v.\degout$ counts the number of successors that can be in the new $(K+1)$-core. Therefore, $v.\degin + v.\degout$ upper bounds the number of $v$'s neighbors in the new $(K+1)$-core.  

\begin{theorem} 
\label{th:potential-degree}
Given a constructed DAG $\vec G=(V,\vec E)$ by inserting an edge $u\mapsto v$ with $K = u.core \leq v.core$, the candidate set $V^*$ includes all possible vertices that satisfy: 1) their core numbers equal to $K$, and 2) their total numbers of candidate in-degree and remaining out-degree are greater than $K$, denoted as
$$\forall w \in V: w \in V^* \equiv (w.\core = K~\land~w.\degin + w.\degout > K)$$ 
\begin{proof}
According to Theorem~\ref{thr:core-number} and Theorem~\ref{thr:core-range}, for all vertices in $V^*$, we have 1) their core numbers equal to $K$, and 2) their core numbers will increase to $K+1$ and they can be added to new $(K+1)$-core. 
By the definition of $k$-core, for a vertex $v\in V^*$, $v$ must have at least $K+1$ adjacent vertices that can be in the new $(K+1)$-core. 
As $v.\degin+v.\degout$ is the number of $v$'s adjacent vertices that can be in the new $(K+1)$-core, we get $v.\degin+v.\degout > K$ for all vertices $v\in V^*$. 
\end{proof}
\end{theorem}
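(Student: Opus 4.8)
The plan is to read the statement as a biconditional, $w \in V^* \;\Leftrightarrow\; (w.\core = K \wedge w.\degin + w.\degout > K)$, and to prove the two implications separately. Throughout I would work with the \emph{new} $(K+1)$-core, meaning the $(K+1)$-core of $G$ after the insertion of $u\mapsto v$, and I would lean on two facts already available: by Theorem~\ref{thr:core-number} every vertex's core number changes by at most $1$, and by Theorem~\ref{thr:core-range} every candidate has core number exactly $K$ and lies inside $\texttt{sc}(u)\cup\texttt{sc}(v)$. The only genuinely new content is to connect the \emph{local} quantity $w.\degin + w.\degout$ with \emph{global} membership in the new core, and the orientation of the constructed DAG $\vec G$ is exactly the device that makes this connection clean.

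For the necessity direction ($w\in V^*\Rightarrow$ the conditions), $w.\core=K$ is immediate from Theorem~\ref{thr:core-range}. For the inequality I would count the neighbours of $w$ lying in the new $(K+1)$-core: since $w\in V^*$ its core rises to $K+1$, so $w$ sits in that core and hence has at least $K+1$ neighbours inside it. The key accounting step is to split these neighbours by edge direction. A predecessor $x\in w.\pre$ satisfies $x\preceq w$, so $x.\core\le K$; by Theorem~\ref{thr:core-number} it can enter the new core only if $x.\core=K$ and $x\in V^*$, so predecessor-neighbours in the new core are counted exactly by $w.\degin$. A successor $x\in w.\post$ in the new core has either $x.\core>K$ (already in the core, hence $x\notin V^+$) or $x.\core=K$ with $x\in V^*$; in both cases $x\notin\vgr$, so such successors are among those counted by $w.\degout$. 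Thus $w.\degin + w.\degout$ is at least the number of new-core neighbours of $w$, which is $\ge K+1 > K$.

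The sufficiency direction ($w.\core=K$ and $w.\degin + w.\degout > K \Rightarrow w\in V^*$) is where I expect the real work, and it is here that the circularity of the definitions bites: $\degin$ and $\degout$ are themselves defined through the final sets $V^*$ and $\vgr$, so the statement is really a fixed-point characterisation rather than a bound against independent data. The plan is to argue the contrapositive. Take $w$ with $w.\core=K$ but $w\notin V^*$, hence $w\in\vgr$; then its core number stays $K$ and $w$ is absent from the new $(K+1)$-core, so $w$ has at most $K$ neighbours in that core. I would reuse the predecessor/successor accounting, but now in the reverse inequality, which forces me to control the one place where $\degout$ can \emph{overcount}: core-$K$ successors that were never classified. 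So the crux is to establish, as an invariant of the order-based traversal at termination, that every core-$K$ neighbour of a processed vertex has been placed into either $V^*$ or $\vgr$; with no stray unvisited successors, $w.\degout$ counts exactly the successor-neighbours in the new core and $w.\degin$ the predecessor ones, giving $w.\degin + w.\degout\le K$ and contradicting the hypothesis.

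I anticipate this termination invariant to be the main obstacle, precisely because it depends on \emph{how} $V^*$ is generated rather than on the static Theorems~\ref{thr:core-number} and~\ref{thr:core-range}. I would discharge it by an induction along the $k$-order starting from $u$: process core-$K$ vertices in $\preceq$-order, maintain that a vertex is demoted into $\vgr$ only once its running value of $\degin+\degout$ can no longer exceed $K$, and show this running value equals the final one at termination. Establishing that this monotone promotion/demotion process converges to a well-defined fixed point and that this fixed point coincides with the $V^*$ of Definition~\ref{df:set} is the technical heart; once the invariant is in place, the remaining arithmetic is the same neighbour-counting used in the necessity direction and is routine.
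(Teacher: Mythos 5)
Your proposal is correct, and it is in fact more complete than the paper's own argument. The paper's proof establishes only the forward implication ($w\in V^*\Rightarrow w.\core=K\wedge w.\degin+w.\degout>K$): it cites Theorems~\ref{thr:core-number} and~\ref{thr:core-range} for $w.\core=K$, notes that a candidate must have at least $K+1$ neighbours in the new $(K+1)$-core, and then simply asserts that $w.\degin+w.\degout$ counts (or, per the surrounding text, upper-bounds) those neighbours. Your necessity argument is the same in substance but sharper in the accounting: the predecessor/successor split, with predecessors in the new core forced into $V^*$ by the orientation $x\preceq w$ and Theorem~\ref{thr:core-number}, and successors in the new core necessarily outside $\vgr$, is exactly the justification the paper leaves implicit. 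Where you genuinely diverge is the sufficiency direction: the paper does not prove it here at all, despite the $\equiv$ in the statement, and instead discharges completeness later as the $complete(V^*)$ conjunct of the loop invariant in Theorem~\ref{th:insert-correct}. Your plan --- contrapositive plus a termination invariant that every core-$K$ neighbour of a processed vertex is classified into $V^*$ or $\vgr$, with Lemma~\ref{lm:outdegree} handling untouched vertices --- is essentially a reconstruction of that invariant, and you correctly diagnose that the biconditional is a fixed-point property of the traversal rather than a static consequence of Theorems~\ref{thr:core-number} and~\ref{thr:core-range}. What the paper's route buys is brevity by deferring the hard half to the correctness proof of Algorithm~\ref{alg:insert}; what yours buys is an honest, self-contained reading of the theorem as stated. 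The one caution is not to present the sufficiency half as provable from the cited theorems alone --- as you note, it cannot be, since $\degin$ and $\degout$ are defined through $V^*$ and $V^+$ themselves.
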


\begin{theorem}
\label{th:max-potential-degree}
Given a constructed DAG $\vec G=(V, \vec E)$ by inserting an edge $u \mapsto v$ with $u$ in $\od_K$, all affected vertices $w$ are after $u$ in $\od_K$. 
Starting from $u$, when $w$ is traversed in $\od_K$ and the $V^+$, $V^*$, $w.\degin$, $w.\degout$ are updated accordingly, each time the value of $w.\degin+w.\degout$ is maximal.
\begin{proof}
For all the vertices in the constructed DAG $\vec G$, $\od$ is the topological order in $\vec G$ according to the definition of $\vec G$. When traversing affected vertices $w$ in $G$ in such topological order, each time for $w$ all the affected predecessors must have been traversed, so that we get the value of $w.\degin$ is maximal; also, all the related successors are not yet traversed, so that the value of $w.\degout$ is also maximal.
Therefore, the total value of $w.\degin+w.\degout$ is maximal.
\end{proof}
\end{theorem}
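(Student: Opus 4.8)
The plan is to split the statement into two claims and dispatch them separately: (i) every affected vertex lies at or after $u$ in $\od_K$, and (ii) at the instant the traversal first visits an affected vertex $w$, the quantity $w.\degin+w.\degout$ attains its maximum value over the entire run.

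For claim (i), I would first invoke Theorem~\ref{thr:core-range} to confine $V^*$ to vertices whose core number is exactly $K$, so that every affected vertex indeed lives inside $\od_K$; it then remains to rule out any affected vertex strictly preceding $u$. My approach is a \emph{prefix-preservation} argument on the BZ peeling. Since the inserted edge $u\mapsto v$ satisfies $u\preceq v$, both of its endpoints lie at or after $u$ in the $k$-order, hence no vertex $w$ with $w\prec u$ is incident to the new edge. I would then show, by induction along the $k$-order, that the old peeling sequence restricted to $\{w : w\prec u\}$ is still a valid BZ peeling of $G\cup\{(u,v)\}$: when such a $w$ is about to be removed, the already-removed set is the same prefix, and the only vertices whose degree changed are $u$ and $v$, which are still present and of strictly larger degree. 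Consequently $w$ keeps the same current degree as in $G$ and still realizes the current minimum, so it is removed with the same value and assigned the same core number. Thus no $w\prec u$ can belong to $V^*$, giving $V^*\subseteq\{w:w\succeq u\}\cap\od_K$.

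For claim (ii), I would exploit that $\od$ is precisely the topological order of $\vec G$, so the traversal visits vertices consistently with the edge directions. When $w$ is first visited, every predecessor in $w.\pre$ has already been processed and its membership in $V^*$ decided; hence $w.\degin$ already counts all predecessors it will ever count, and since a vertex can afterwards only be \emph{removed} from $V^*$ (never newly added after its own processing), $w.\degin$ can only decrease later and is therefore maximal now. Dually, no successor in $w.\post$ has yet been visited, so none has been confirmed to lie in $\vgr$; thus $w.\degout$ equals the full out-degree at this moment and can only shrink as successors are later excluded. Summing the two, $w.\degin+w.\degout$ is maximal at the visit time, as required.

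The step I expect to be the main obstacle is the prefix-preservation argument inside claim (i): one must check carefully that raising the degrees of $u$ and $v$ each by one never forces a different minimum-degree choice during the peeling of $\{w:w\prec u\}$ — including the tie-breaking among equal-degree vertices — so that the same prefix order, with the same core assignments, remains admissible for $G\cup\{(u,v)\}$. Once this prefix invariance is secured, claim (ii) is comparatively routine, following directly from reading $\od$ as the topological order of $\vec G$ together with the monotonicity of how $V^*$ memberships are decided.
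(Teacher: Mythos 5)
Your argument for the maximality of $w.\degin+w.\degout$ is essentially the paper's own proof: the paper likewise observes that $\od$ is a topological order of $\vec G$, so that when $w$ is visited all affected predecessors have already been processed (making $w.\degin$ as large as it will ever be, since later steps can only evict vertices from $V^*$) and no successor has yet been confirmed to lie in $V^+\setminus V^*$ (making $w.\degout$ equal to its full value). Where you genuinely go beyond the paper is claim (i): the paper's proof simply asserts that all affected vertices come after $u$ in $\od_K$ and never argues it, whereas you supply a prefix-preservation argument on the BZ peeling — the inserted edge raises only the degrees of $u$ and $v$, so every vertex $w\prec u$ still realizes the running minimum at its removal step and receives the same core number, hence cannot be in $V^*$. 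That argument is sound (a vertex chosen as a minimum remains a minimum when all other degrees are unchanged or increased), and it closes a real gap in the paper's exposition; combined with Theorem~\ref{thr:core-range} it pins $V^*$ to the suffix of $\od_K$ starting at $u$, which is exactly what the algorithm's traversal relies on. In short: same route for the part the paper proves, plus a more careful treatment of the part it does not.
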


In other words, when traversing the affected vertices $w$ in $\od$, $w.\degin+w.\degout$ is the upper-bound. That means, when traversing the vertices after $w$ in $\od$, $w.\degin+w.\degout$ only can be decrease as some of vertices can be removed from $V^*$. In this case, we can safely remove $w$ from $V^*$ if $w.\degin+w.\degout \leq K$, since $w$ is impossibly in $V^*$ according to Theorem~\ref{th:potential-degree}. This is the key idea behind the order-based insertion algorithm.

\paragraph{The Algorithm} 
Algorithm \ref{alg:insert} shows the detailed steps when inserting an edge $u\mapsto v$.
One issue is the implementation of traversing the vertices in $\od_k$. We propose to use a Min-Priority Queue combined with the Order Data Structure (line 4). The idea is as follow: 1) $\od_k$ is maintained by the Order Data Structure \cite{dietz1987two,bender2002two}, by which each vertex is assigned a label (an integer number) to indicate the order, and 2) all adjacent vertices are added into a Min-Priority Queue by using such labels as their keys. 
By doing this, we can dequeue a vertex from the Min-Priority Queue for each time to ``jumping'' over not-affected vertices efficiently. 
Further, three colors are used to indicate the different status for each vertex $v$ in a graph: 
\begin{itemize}
 \item  \texttt{white}: $v$ has initial status, $v\notin V^* \land v\notin V^+$.
 \item  \texttt{black}: $v$ is traversed and identified as a candidate vertex, $v\in V^* \land v\in V^+$.
 \item  \texttt{gray}: $v$ is traversed and identified impossible to be a candidate vertex, $v \notin V^* \land v \in V^+ \equiv v \in \vgr$
\end{itemize}

Before executing, we assume that for all vertices $v\in V(\vec G)$ their $\degout$ and $\degin$ are correctly maintained, that is $v.\degout = |v.\post|~\land~v.\degin = 0$. 
Initially, both $V^*$ and $V^+$ are empty (all vertices are \texttt{white}) and $K$ is initialized to $u.core$ since $u\preceq v$ for $u\mapsto v$ (line 1).
After inserting an edge $u\mapsto v$ with $u\preceq v$ in $\od$, we have $u.\degout$ increase by one (line 2). The algorithm will terminate if $u.\degout \leq u.\core$ as Lemma \ref{lm:outdegree} is satisfied (line 3). 
Otherwise, $u$ is added into the Min-Priority Queue $Q$ (line 4) for propagation (line \ref{alg:insert-while} to \ref{alg:insert-while-end}). 
For each $w$ removed from $Q$ (line 6), we check the value of $w.\degin+w.\degout$. That is, if $w.\degin+w.\degout > K$, vertex $w$ can be added to $V^*$ and may cause other vertices added in $V^*$, which is processed by the \texttt{Forward} procedure (line 7). Otherwise, $w$ cannot be added to $V^*$, which may cause some vertices to be removed from $V^*$ processed by the \texttt{Backward} procedure (line 8). 
Here, $w.\degin > 0$ means $w$ is affected, or else $w$ can be omitted since $w$ has no predecessors in $V^*$ (line 8). 
When $Q$ is empty, this process terminates and $V^*$ is obtained (line 5). 
At the ending phase, for all vertices $V^*$, their core numbers are increased by one (by Theorem \ref{thr:core-number}) and their $\degin$ are reset (line 9).
Finally, the $\od$ is maintained (line 10).

\begin{algorithm}[!htb]
\small
\SetAlgoNoEnd
\caption{EdgeInsert($\vec G, \od, u\mapsto v$)}
\label{alg:insert}
\DontPrintSemicolon
\SetKwInOut{Input}{input}\SetKwInOut{Output}{output}
\SetKwFunction{Min}{min}
\SetKwFunction{Forward}{Forward}
\SetKwFunction{Backward}{Backward}
\SetKwFunction{Ending}{Ending}
\SetKwFunction{DoDeg}{DoDeg}
\SetKwData{True}{true}
\SetKwData{False}{false}
\SetKwData{Black}{black}
\SetKwData{Dark}{dark}
\SetKwData{Gray}{gray}
\SetKwData{White}{white}
\SetKwData{Red}{red}
\SetKw{Break}{break}
\SetKwFunction{}{}
\SetKw{Continue}{continue}

\Input{A DAG $\vec G(V,\vec E)$; the corresponding $\od$; an edge $u\mapsto v$ to be inserted.}
\Output{An updated DAG $\vec G(V, \vec E)$; the updated $\od$.}

$V^*, V^+, K \gets \emptyset, \emptyset, u.\core$ \tcp*[r]{all vertices are \White}
insert $u\mapsto v$ into $\vec G$ with $u.\degout \gets u.\degout+1$\;
\lIf{$u.\degout \leq K$}{\textbf{return}}

$Q \gets $ a min-priority queue by $\od$; $Q.\enqueue (u)$\;
\While(){$Q \neq \emptyset$}{ \label{alg:insert-while}
    $w \gets Q.dequeue()$\;
    \lIf{$w.\degin + w.\degout > K$}{
        \Forward{$w, Q, K$}
    }
    \lElseIf{$w.\degin > 0$} {
        \Backward{$w, \od, K$}
    }
}\label{alg:insert-while-end}

\tcp*[l]{Ending Phase}
\lFor{$w\in V^*$}{$w.\core \gets K + 1$; $w.\degin \gets 0$}
\lFor{$w\in V^*$}{remove $w$ from $\od_K$ and insert $w$ at the beginning of $\od_{K+1}$ in $k$-order (the order $w$ added into $V^*$)}

\end{algorithm} 

The detailed steps of the \texttt{Forward} procedure are shown in Algorithm \ref{alg:insert-forward}. At first, $u$ is added to $V^*$ and $V^+$ (set from \texttt{white} to \texttt{black}) since $u$ has $u.\degin + u.\degout > K$ (line 1). 
Then, for each $u$'s successors $v$ whose core numbers equals to $K$ (by Theorem \ref{thr:core-range}), $v.\degin$ is increased by one (lines 2 and 3). In this case, $v$ is affected and has to be added into $Q$ for subsequent propagation (line 4).  

\begin{algorithm}[!htb]
\small
\SetAlgoNoEnd
\caption{Forward($u, Q, K$)}
\label{alg:insert-forward}
\DontPrintSemicolon
\SetKwInOut{Input}{input}\SetKwInOut{Output}{output}
\SetKwFunction{Min}{min}
\SetKwFunction{Forward}{Forward}
\SetKwFunction{Backward}{Backward}
\SetKwFunction{Ending}{Ending}
\SetKwFunction{DoDeg}{DoDeg}
\SetKwFunction{Delete}{{DELETE}}
\SetKwData{True}{true}
\SetKwData{False}{false}
\SetKwData{Black}{black}
\SetKwData{Dark}{dark}
\SetKwData{Gray}{gray}
\SetKwData{White}{white}
\SetKwData{Red}{red}
\SetKw{Break}{break}
\SetKw{With}{:}
\SetKwFunction{}{}
\SetKw{Continue}{continue}
    $V^* \gets V^* \cup \{u\}$; $V^+\gets V^+ \cup \{u\}$ \tcp*[r]{$u$ is \White to \Black}
    \For{$v \in u.post$ \With $v.\core = K$}{
        $v.\degin \gets v.\degin + 1$\;
        \lIf{$v \notin Q$}{ $Q.enqueue(v)$ }
    }
\end{algorithm} 

The detail steps of the \texttt{Backward} procedure are shown in Algorithm \ref{alg:insert-backward}. 
In the \texttt{DoPre}$(u)$ procedure, for all $u$'s predecessors $v$ that are located in $V^*$ (line 11), $v.\degout$ is decreased by one since $u$ is set to \texttt{gray} and cannot be added into $V^*$ any more (line 12); in this case, $v$ has to be added into $R$ for propagation if $v.\degin+v.\degout \leq K$(line 13).
Similarly, in the \texttt{DoPost}$(u)$ procedure, for all $u$'s successors $v$ that have $v.\degin > 0$ (line 15), $v.\degin$ is decreased by one (line 16) and added into $R$ for propagation if $v.\degin+v.\degout \leq K$ (lines 17 and 18).

The detailed steps of the \texttt{Backward} procedure are shown in Algorithm \ref{alg:insert-backward}. The queue $R$ is used for propagation (line 2). 
The \texttt{DoPre}$(u)$ procedure updates the graph when setting $u$ from \texttt{white} to \texttt{gray} or from \texttt{black} to \texttt{gray}, that is, for all $u$'s predecessors in $V^*$, all $\degout$ are off by $1$ and then added to $R$ for propagation, if its $\degin + \degout \leq K$ since they can not be in $V^*$ any more (lines 10 - 13). 
Similarly, the \texttt{DoPost} procedure updates the graph when setting $u$ from \texttt{black} to \texttt{gray}, that is, for all $u$'s successors with $\degin > 0$, all $\degin$ are off by $1$ and then added to $R$ for propagation if it is in $V^*$ and and its $\degin + degout \leq K$ (lines 14 - 18). 
Now, we explain the algorithm step by step. 
At first, $w$ is just added to $V^+$ (set from \texttt{white} to \texttt{gray}) since $w$ has $w.\degin+ w.\degout \leq K$ (line 1).
The queue $R$ is initialized as empty for propagation (line 2) and $w$ is propagated by the \texttt{DoPre} procedure. 
Of course, $w$'s $\degout$ and $\degin$ are updated (line 3) since all \texttt{black} vertices causing $w.\degin$ increased will be moved after $w$ in $\od$ eventually. 
All the vertices in $R$ are \texttt{black} waiting to be propagated (lines \ref{backward:while} to \ref{backward:while-end}).
For each $u\in R$, vertex $u$ is removed from $R$ (line 5) and removed from $V^*$, which sets $u$ from \texttt{black} to \texttt{gray} (line 6). 
This may require $\degin$ and $\degout$ of adjacent vertices to be updated, which is done by the procedures \texttt{DoPre} and \texttt{DoPost}, respectively (line 7). 
To maintain $\od_K$, $u$ is first removed from $\od_K$ and then inserted after $p$ in $\od_K$, where $p$ initially is $w$ or the previous moved vertices in $\od_K$ (line 8).
Of course, $u$'s $\degout$ and $\degin$ are updated (line 3) since all \texttt{black} vertices causing $u.\degin$ increased will be moved after $w$ in $\od$ eventually. 
This process is repeated until $R$ is empty (lines \ref{backward:while} to \ref{backward:while-end}).

\begin{algorithm}[!htb]
\small
\SetAlgoNoEnd
\caption{Backward($w, \od, K$)}
\label{alg:insert-backward}
\DontPrintSemicolon
\SetKwInOut{Input}{input}\SetKwInOut{Output}{output}
\SetKwFunction{Min}{min}
\SetKwFunction{Forward}{Forward}
\SetKwFunction{Backward}{Backward}
\SetKwFunction{Ending}{Ending}
\SetKwFunction{DoDeg}{DoDegree}
\SetKwFunction{DoPre}{DoPre}
\SetKwFunction{DoPost}{DoPost}
\SetKwFunction{Insert}{INSERT}
\SetKwFunction{Delete}{DELETE}
\SetKwData{Black}{black}
\SetKwData{Gray}{gray}
\SetKwData{White}{white}
\SetKwData{Red}{red}
\SetKw{Break}{break}
\SetKw{With}{$:$}
\SetKwFunction{}{}
\SetKw{Continue}{continue}

    $V^+\gets V^+ \cup \{w\}$; $p \gets w$\tcp*[r]{$w$ is \White to \Gray}

    $R \gets$ an empty queue;
    \DoPre{$w$}\;
    $w.\degout\gets w.\degout+w.\degin$; $w.\degin \gets 0$\;    
    
    \While(){$R \neq \emptyset$}{ \label{backward:while}
    $u \gets R.dequeue()$\;
    $V^* \gets V^* \setminus \{u\}$ \tcp*[r]{$u$ is \Black to \Gray}
    \DoPre{$u$}; \DoPost{$u$}\;
    \Delete($\od_K, u$); \Insert{$\od_K, p, u$}; $p \gets u$\;
    $u.\degout\gets u.\degout+u.\degin$; $u.\degin \gets 0$\;
    } \label{backward:while-end}

\medskip
\SetKwProg{myproc}{procedure}{ \DoPre{$u$}}{}
\myproc{}{
    \For{$v \in u.\pre$ \With $v\in V^*$}{
        $v.\degout \gets v.\degout - 1$\;
        \lIf{$v.\degin + v.\degout \leq K~\land~v\notin R$}{
                $R.enqueue(v)$
        }
        
    }
}
\medskip
\SetKwProg{myproc}{procedure}{ \DoPost{$u$}}{}
\myproc{}{
    \For{$v \in u.\post$ \With $v.\degin>0$}{
            $v.\degin \gets v.\degin - 1$\;
            \If{$v \in V^*~\land~v.\degin + v.\degout \leq K~\land~v\notin R$}{
                 $R.enqueue(v)$
            }
    }
}

\end{algorithm} 

\begin{figure}[htb]
\centering
\includegraphics[scale=0.4]{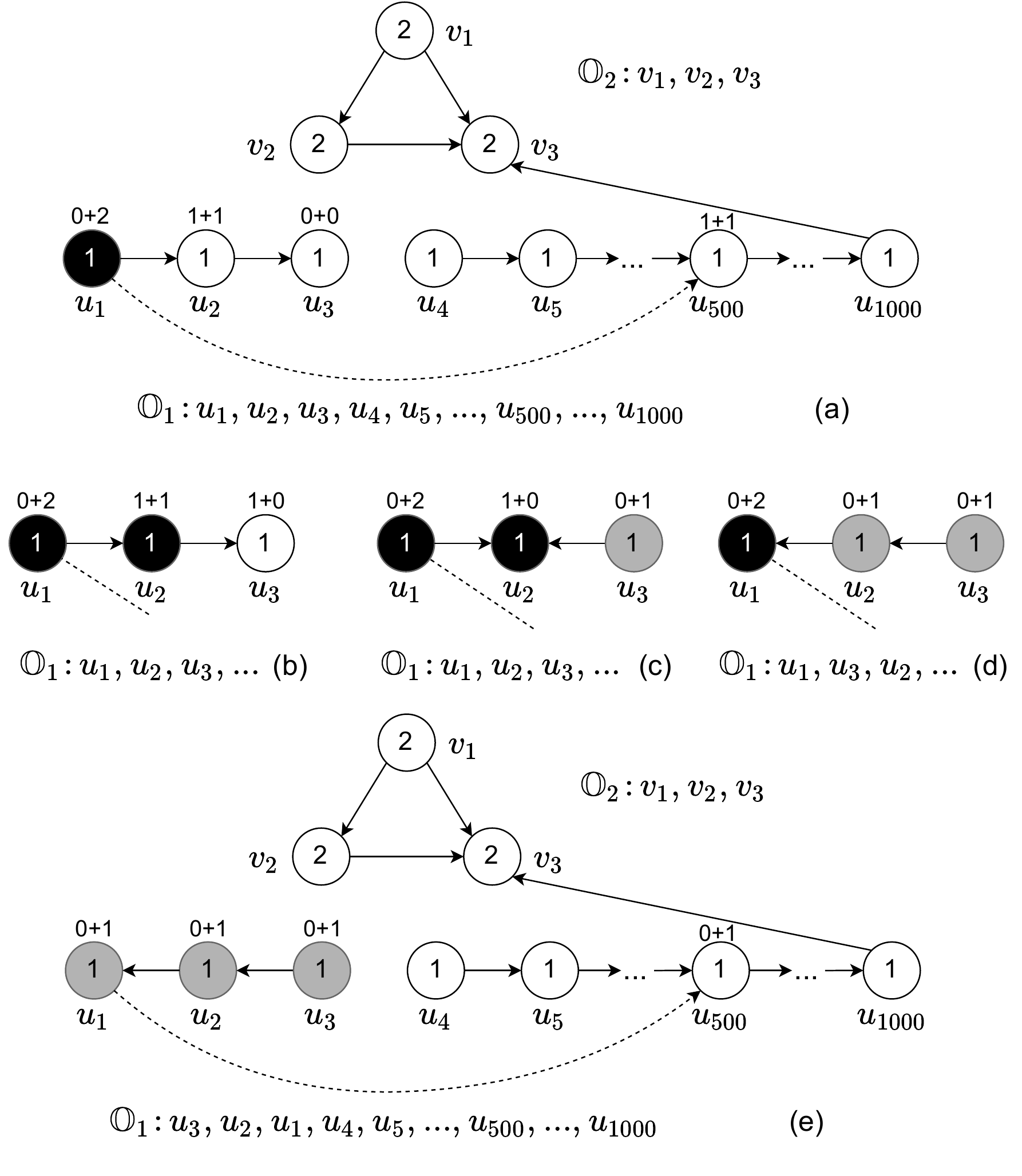}
\caption{Insert one edge $u_1\mapsto u_{500}$ to a constructed graph $\vec G$ obtained from Figure \ref{fig:order}.} 
\label{fig:insert}
\end{figure}

\begin{example}
Consider inserting an edge to a constructed graph in Figure~\ref{fig:insert} obtained from Figure \ref{fig:order}. The number inside the vertices are the core numbers, and the two numbers beside the vertices $u_1, u_2, u_3$ and $u_{500}$ are their $\degin+\degout$. Initially, we have the min-priority queue $Q=\emptyset$ and $K=1$.
In Figure \ref{fig:insert}(a), after inserting an edge $u_1\mapsto u_{500}$, we get $u_1.\degout =2~>~K$ and therefore $u_1$ is add to $Q$ as $Q=\{u_1\}$. 
We begin to propagate $Q$. 
First, in Figure~\ref{fig:insert}(a), $u_1$ is removed from $Q$ to do the \texttt{Forward} procedure since $u_1.\degout + u_1.\degin = 0 + 2~>~K$, by which $u_1$ is colored by \texttt{black}, all $u_1.\post$'s $\degin$ add by $1$, and all $u_1.\post$ are put into $Q$ as $Q=\{u_2, u_{500}\}$. 
Second, in Figure~\ref{fig:insert}(b), $u_2$ is removed from $Q$ to do the \texttt{Forward} procedure since $u_2.\degout+u_2.\degin = 1+1~>~K$, by which $u_2$ is colored by \texttt{black}, all $u_2.\post$'s $\degin$ add by $1$, and all $u_1.\post$ are added into $Q$ as $Q=\{u_3, u_{500}\}$. 
Third, in Figure~\ref{fig:insert}(c), however, $u_3$ is removed from $Q$ to do the \texttt{Backward} procedure since $u_3.\degin+ u_3.\degout = 1+0~\leq~K$, by which $u_3$ is colored by \texttt{gray} and we have $Q=\{u_{500}\}$. 

The \texttt{Backward} procedure continues.
In Figure~\ref{fig:insert}(d), we get $u_2.\degout$ off by $1$ and $u_2.\degin + u_2.\degout = 1+0~\leq~K$, so that $u_2$ is set to \texttt{gray}, by which $u_2$ is moved after $u_3$ in $\od_1$. 
In Figure~\ref{fig:insert}(e), we get $u_1.\degout$ off by $1$ and $u_1.\degin+u_1.\degout = 0+1~\leq~K$, so that $u_1$ is also set to \texttt{gray}, by which $u_1$ is moved after $u_3$ in $\od_1$; also, we get $u_{500}.\degin$ off by $1$ and the \texttt{Backward} procedure terminate.
Finally, we still need to check the last $u_{500}$ in $Q$, which can be safely omitted since its $\degin$ is $0$. 
In this simple example, we have $V^*=\emptyset \land V^+=\{u_1,u_2,u3\}$ and only $4$ vertices added to $Q$. A large number of vertices in $\od_1$, e.g. $u_4 \dots u_{1000}$, are avoid to be traversed.
\end{example}

\paragraph{Correctness.}
The key issue of the Algorithm \ref{alg:insert} is to identify the candidate set $V^*$. For the correctness, the algorithm has to be sound and complete.
The soundness implies that all vertices in $V^*$ are correctly identified,  
\[\fs sound(V^*)~\equiv~\forall v\in V: v\in V^* \Rightarrow  v.\degin+v.\degout >K~\land~v.\core = K\]
The completeness implies that all possible candidate vertices are added into $V^*$,     
\[\fs complete(V^*)~\equiv~\forall v \in V: v.\degin+v.\degout >K~\land~v.\core =K~\Rightarrow~v\in V^*\]
The algorithm has to ensure both the soundness and completeness
\[\fs sound(V^*)~\land complete(V^*), \] which is equivalent to
\[\fs \forall v\in V:  v\in V^*~\equiv~v.\degin+v.\degout > K~\land~v.\core = K.\]

To argue the soundness and completeness, we first define the vertices in $V(G)$ to have correct candidate in-degrees and remaining out-degrees as
\begin{equation*} \begin{split} \fs
in^*(V) & \equiv \forall v\in V: v.\degin = |\{w\in v.pre: w\in V^*\}| \\
out^+(V) & \equiv \forall v\in V: v.\degout = |\{w\in v.post: w\notin V^+ \setminus V^*\}|.
\end{split} \end{equation*}
We also define the sequence $\od$ for all vertices in $V$ are in $k$-order as
\[\fs \forall v_i\in V: \od(V) = (v_1, v_2, \dots , v_i) \Rightarrow v_1 \preceq v_2 \preceq \dots \preceq v_i \]

\begin{theorem} [soundness and completeness]
\label{th:insert-correct}
For any constructed graph $\vec G(V,\vec E)$, The while-loop in Algorithm \ref{alg:insert} (lines \ref{alg:insert-while} to \ref{alg:insert-while-end}) terminates with $sound(V^*)$ and $complete(V^*)$.
\end{theorem}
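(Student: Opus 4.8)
The plan is to prove the theorem by a loop-invariant argument for the while-loop (lines \ref{alg:insert-while}--\ref{alg:insert-while-end}). I would carry three auxiliary invariants through every iteration: (i) the degree-consistency predicates $in^*(V)$ and $out^+(V)$ hold; (ii) $\od(V)$ is in $k$-order; and (iii) every vertex currently in $V^*$ has core number $K$ and satisfies $v.\degin + v.\degout > K$. Establishing these at initialization is immediate: before the loop $V^* = V^+ = \emptyset$, so $v.\degin = 0$ and $v.\degout = |v.\post|$ for all $v$, matching the definitions; $\od$ is in $k$-order by hypothesis; and (iii) is vacuous. The bulk of the work is then to show that each of \texttt{Forward} and \texttt{Backward} re-establishes (i)--(iii).

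First I would verify that the two procedures maintain $in^*$ and $out^+$ as local book-keeping. In \texttt{Forward}, adding $u$ to $V^*$ can only change the candidate in-degree of its successors, which is exactly the increment $v.\degin \gets v.\degin + 1$ performed for each $v \in u.\post$ with $v.\core = K$; successors of larger core are irrelevant by Theorem \ref{thr:core-range}. In \texttt{Backward}, moving a vertex from $V^*$ (or white) to gray --- i.e. into $\vgr$ --- removes it as a valid successor for its predecessors in $V^*$ (handled by \texttt{DoPre}, decrementing their $\degout$) and removes it as a candidate predecessor for its successors (handled by \texttt{DoPost}, decrementing their $\degin$). The reassignment $u.\degout \gets u.\degout + u.\degin;\ u.\degin \gets 0$ records that, once $u$ is placed behind the already-processed prefix of $\od_K$, its predecessors in $V^*$ become successors, leaving the sum $u.\degin + u.\degout$ unchanged while restoring $in^*$. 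I would confirm that the \texttt{INSERT}/\texttt{DELETE} operations keep $\od(V)$ in $k$-order, since each displaced gray vertex is reinserted after the running marker $p$, hence behind every vertex that remains black.

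Granting the invariant, soundness at termination is direct: by (iii) every $v \in V^*$ has $v.\core = K$ and $v.\degin + v.\degout > K$, which is exactly $sound(V^*)$; the point is that \texttt{Backward} removes from $V^*$ precisely those vertices whose running sum has dropped to $\le K$, so no violating vertex survives. For completeness I would invoke the maximality principle of Theorem \ref{th:max-potential-degree}: because $Q$ is a min-priority queue keyed by the Order-Data-Structure labels, vertices are dequeued in $k$-order, so when a vertex $w$ is classified its value $w.\degin + w.\degout$ is the largest it will ever take. Hence a vertex sent to gray with $w.\degin + w.\degout \le K$ can never afterwards satisfy the candidate condition, and no genuine candidate is discarded. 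A vertex that is never enqueued must be white with no predecessor in $V^*$; by $in^*$ its $\degin = 0$, and by Lemma \ref{lm:outdegree} together with the fact that the insertion raised only $u.\degout$, such a vertex has $\degout \le K$, so it cannot be a candidate. This rules out missed candidates and yields $complete(V^*)$. Termination then follows from a standard measure: each core-$K$ vertex is reclassified only finitely often, and the queues $Q$ and $R$ drain, so the loop halts.

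The hard part will be the \texttt{Backward} cascade together with the accompanying reordering of $\od_K$. Marking $w$ gray can force predecessors out of $V^*$, which can force their predecessors out, and every such removal simultaneously mutates $\degin$, $\degout$, and positions in $\od_K$; I must show that after the cascade all three invariants hold \emph{jointly}, and in particular that re-enqueuing a successor whose $\degin$ was incremented does not break the monotonicity of $v.\degin + v.\degout$ exploited in the completeness step. The delicate point is that the $\degin$-to-$\degout$ transfer combined with relocating gray vertices behind the black prefix is exactly what keeps the per-vertex sum non-increasing once a vertex is settled; pinning this down rigorously --- that the reordering preserves both $k$-order and the maximality premise of Theorem \ref{th:max-potential-degree} for all subsequent dequeues --- is where the real care is needed.
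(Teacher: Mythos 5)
Your proposal is correct and follows essentially the same route as the paper: a loop-invariant argument carrying $in^*(V)$, $out^+(V)$, $k$-order of $\od$, and soundness of $V^*$, with completeness obtained from the maximality of $w.\degin + w.\degout$ under $k$-order traversal (Theorem \ref{th:max-potential-degree}) plus the observation that never-enqueued vertices have $\degin = 0$ and $\degout \leq K$. The paper packages the completeness step as an explicit invariant $complete(V^*\setminus Q)$ and weakens soundness to $sound(V^*\setminus R)$ inside the \texttt{Backward} cascade, which is exactly the bookkeeping refinement you flag as the remaining delicate point.
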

\begin{proof}
The invariant of the outer while-loop (lines \ref{alg:insert-while} to \ref{alg:insert-while-end} in Algorithm \ref{alg:insert}) is that all vertices in $V^*$ are sound, but adding a vertex to $V^*$ (\texttt{white} to \texttt{black}) may lead to its successors to be incomplete; for all vertices, their $\degin$ and $\degout$ counts are correctly maintained. 
All vertices in $Q$ have their core numbers as $K$ and their $\degin$ must be greater or equal to $0$, and all vertices $v\in V$ must be greater than $0$ if $v$ is located in $Q$; 
also, the $k$-order for all the vertices not in $V^*$ is correctly maintained:
\begin{equation*} \begin{split}
&sound(V^*)~\land~complete(V^*\setminus Q)~\land~in^*(V)~\land~out^+(V) \\
&~\land~(\forall v\in Q: v.\core = K \land v.\degin \geq 0)\\
&~\land~(\forall v\in V: v.\degin > 0 \Rightarrow v \in Q)\\
&~\land~\od(V\setminus V^*)\\
\end{split} \end{equation*}
The invariant initially holds as $V^* = \emptyset$ and for all vertices their $\degin$, $\degout$ and $k$-order are correctly initialized; also $u$ is first add to $Q$ for propagation only when $u.\core = K \land u.\degout > K \land u.\degin = 0$. We now argue that the while-loop preserves this invariant:
\begin{itemize} [leftmargin=*]
    \item[--] $sound(V^*)$ is preserved as $v\in V$ is added to $V^*$ only if $v.\degin+v.\degout > K$ by the \texttt{Forward} procedure; also, $v$ is safely removed from $V^*$ if $v.\degin+v.\degout \leq K$ by the \texttt{Backward} procedure according to Theorem \ref{th:max-potential-degree}. 
    
    \item[--] $complete(V^*\setminus Q)$ is preserved as all the affected vertices $v$, which may have $v.\degin + v.\degout > K$, are added to $Q$ by the \texttt{Forward} procedure for propagation.
    
    \item[--] $in^*(V)$ is preserved as each time when a vertex $v$ is added to $V^*$, all its successors' $\degin$ are increased by $1$ in the \texttt{Forward} procedure; also each time when a vertex $v$ cannot be added to $V^*$, the $\od$ may change \texttt{Backward} procedure.
    
    \item[--] $out^*(V)$ is preserved as each time when a vertex $v$ cannot be added to $V^*$, the $\od$ may change and the corresponding $\degout$ are correctly maintained by the \texttt{Backward} procedure.
    
    \item[--] $(\forall v\in Q: v.\core = K \land v.\degin \geq 0)$ is preserved as in the \texttt{Forward} procedure, the vertices $v$ are added in $Q$ only if $v.core = K$ with $v.\degin$ add by $1$; but $v.\degin$ may be reduced to $0$ in the \texttt{Backward} procedure when some vertices cannot in $V^*$.
    
    \item[--] $(\forall v\in V: v.\degin > 0 \Rightarrow v \in Q)$ is preserved as $v$ can be added in $Q$ only after adding $v.\degin$ by $1$ in the \texttt{Forward} procedure.
    
    \item[--] $\od(V\setminus V^*)$ is preserved as the $k$-order of all vertices $v\in \vgr$ is correctly maintained by the \texttt{Backward} procedure and the $k$-order of all the other vertices $v\in V\setminus V^+$ is not affected.
\end{itemize}

We also have to argue the invariant of the inner while-loop in the \texttt{Backward} procedure (lines \ref{backward:while} to \ref{backward:while-end} in Algorithm \ref{alg:insert-backward}).
The additional invariant is that all vertices in $R$ has to be located in $V^*$ but not sound as their $\degin +\degout \leq K$:
\begin{equation*} \begin{split}
&sound(V^*\setminus R)~\land~complete(V^* \setminus {Q})~\land~in^*(V)~\land~out^+(V)\\
&~\land~{(\forall v\in R: v.core = K \land v\in V^* \land v.\degin + v.\degout \leq K)}\\
&~\land~(\forall v\in Q: v.core = K \land v.\degin \geq 0)\\
&~\land~(\forall v\in V: v.\degin > 0 \Rightarrow v \in Q)\\
&~\land~{\od(V\setminus V^*)}
\end{split} \end{equation*}

The invariant initially holds as for $w$, all its predecessors' $\degout$ are off by 1 and added in $R$ if their $\degin + \degout \leq K$ since $w$ is identified in $\vgr$ (\texttt{gray}). 
We have $w$ $\preceq$ all vertices in $V^*$ in $\od$, denoted as $w\preceq V^*$, as 1) $v$ can be moved to the head of $\od_{K+1}$ and $v.\core$ is add by $1$ if $v$ is still in $V^*$ when the outer while-loop terminated, and 2) $v$ is removed from $V^*$ and moved after $w$ in $O_K$. In this case, $w.\degout$ and $w.\degin$ are can be correctly updated to $(w.\degout + w.\degin)$ and $0$, respectively. 
We now argue that the while-loop preserves this invariant:
\begin{itemize} [leftmargin=*]
    \item[--] $sound(V^*\setminus R)$ is preserved as all $v\in V^*$ are added to $R$ if $v.\degin + v.\degout \leq K$. 
    
    \item[--] $in^*(V)$ is preserved as each time for a vertex $u\in R$ setting from \texttt{black} to \texttt{gray}, for all its affected successor, which have $\degin > 0$, their $\degin$ are off by 1; also, $u.\degin$ is set to $0$ when setting from \texttt{black} to \texttt{gray} since $u\preceq$ all vertices in $V^*$ in the changed $\od$.
    
    \item[--] $out^*(V)$ is preserved as each time for a vertex $u\in R$ setting from \texttt{black} to \texttt{gray}, for all its affected predecessor, which are in $V^*$, their $\degout$ are off by 1; also, $u.\degout$ is set to $u.\degout+u.\degin$ since $u\preceq$ all vertices in $V^*$ in the changed $\od$. 
    
    \item[--] ${(\forall v\in R: v.core = K \land v\in V^* \land v.\degin + v.\degout \leq K)}$ is preserved as each time for a vertex $v\in V^*$, $v$ is checked when $v.\degin$ or $v.\degout$ is off by 1, and $v$ is added to $R$ if $v.\degin + v.\degout \leq K$. 
    
    
    \item[--] $\od(V\setminus V^*)$ is preserved as each vertex $v$ that removed from $V^*$ by pealing are moved following $p$ in $O_K$, where $p$ is $w$ or the previous vertex removed from $V^*$. 
\end{itemize}

At the termination of the inner while-loop, we get $R=\emptyset$. At the termination of the outer while-loop, we get $Q=\emptyset$. The postcondition of the outer while-loop is $sound(V^*)\land complete(V^*)$. 
\end{proof}

At the ending phase of Algorithm \ref{alg:insert}, the core numbers of all vertices in $V^*$ are add by 1, and $\od$ in $k$-order is maintained. On the termination of Algorithm \ref{alg:insert}, the core numbers are correctly maintained and also $\od(V) \land in^*(V) \land out^+(V)$, which provide correct initial state for the next edge insertion.

\paragraph{Complexity}
\begin{theorem} The time complexity of the simplified order-based insertion algorithm is $O(|E^+| \cdot \log |E^{+}|)$ in the worst case, where $|E^+|$ is the number of adjacent edges for all vertices in $V^+$ defined as $|E^+| = \sum_{v\in V^{+}}^{} v.\Deg$.
\begin{proof}
As the definition of $V^+$, it includes all traversed vertices to identify $V^*$. 
In the \texttt{Forward} procedure, the vertices in $V^+$ are traversed at most once, so do in the \texttt{Backward} procedure, which requires worst-case $O(|E^+|)$ time. In the while-loop (Algorithm \ref{alg:insert} lines 5 - 10), the min-priority queue $Q$ includes at most $|E^+|$ vertices since each related edge of vertices in $V^+$ is added into $Q$ at most once. The min-priority queue can be implemented by min-heap, which requires worst-case $O(|E^+| \cdot \log |E^+|)$ time to dequeue all the values. All the vertices in $\od$ are maintained with Order Data Structure, so that manipulating the order of one vertex requires amortized $O(1)$ time; there are totally at most $|V^+|$ vertices whose order are manipulated, which requires worst-case $O(|V^+|)$ amortized time. Therefore, the total worst-case time complexity is $O(|E^+| + |E^+|\cdot \log |E^+| + |V^+|) = O(|E^+|\cdot \log |E^+|)$.
\end{proof}
\end{theorem}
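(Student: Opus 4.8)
The plan is to decompose the total running time into three independent contributions and to bound each in turn: (i) the adjacency-list scanning performed inside the \texttt{Forward} and \texttt{Backward} procedures, (ii) the cost of all operations on the min-priority queue $Q$, and (iii) the cost of maintaining the order $\od$ via the Order Data Structure. I expect contribution (ii) to dominate and to deliver the claimed bound $O(|E^+| \cdot \log |E^+|)$.

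First I would bound the scanning cost (i). The crucial structural fact is that every vertex passes monotonically through the colors \texttt{white} $\to$ \texttt{black} $\to$ \texttt{gray} and never reverts; combined with the fact that $Q$ is keyed by $\od$, so that vertices are dequeued in $k$-order (the topological order of $\vec G$), a vertex's predecessors in $V^+$ are always processed before it. Hence each $w \in V^+$ is handled by \texttt{Forward} at most once and peeled by \texttt{Backward} at most once, and each such handling scans $w.\post$ or $w.\pre$ at cost $O(w.\Deg)$. Summing over $V^+$ gives $\sum_{w \in V^+} w.\Deg = |E^+|$, so (i) is $O(|E^+|)$.

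Next I would bound the queue cost (ii). Every enqueue is triggered by relaxing an incident edge of a newly \texttt{black} vertex inside \texttt{Forward}, and the guard $v \notin Q$ forbids a duplicate copy; charging each enqueue to a distinct incident edge of a vertex in $V^+$ shows that $Q$ receives at most $O(|E^+|)$ insertions over the whole run. Implementing $Q$ as a binary min-heap keyed by the integer labels that the Order Data Structure assigns to vertices, each enqueue and dequeue costs $O(\log |E^+|)$, for a total of $O(|E^+| \cdot \log |E^+|)$. For (iii), the \texttt{DELETE}/\texttt{INSERT} calls in \texttt{Backward} and the final relocation of $V^*$ into $\od_{K+1}$ each touch a distinct vertex of $V^+$, so there are $O(|V^+|)$ order operations, and by the amortized $O(1)$ guarantee of the Order Data Structure this contributes only $O(|V^+|)$. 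Summing, the running time is $O(|E^+| + |E^+|\cdot\log|E^+| + |V^+|) = O(|E^+|\cdot\log|E^+|)$, using $|V^+| \le |E^+|$.

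The main obstacle is the charging argument underlying (ii) together with the ``at most once'' claim underlying (i): I must rule out a vertex being repeatedly dequeued and re-sent to \texttt{Forward}, which would re-scan its successors and over-count $\degin$ increments. This is exactly where processing in $k$-order is essential, since all predecessors of $w$ precede $w$ in $\od$, so every increment of $w.\degin$ (and hence every enqueue of $w$) occurs strictly before $w$ is dequeued, and $w$ therefore leaves $Q$ exactly once. Making this precise, ideally by reusing the loop invariant already established in Theorem \ref{th:insert-correct} and the maximality guarantee of Theorem \ref{th:max-potential-degree}, is the step that needs the most care; the remaining three summations are routine.
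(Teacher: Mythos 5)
Your decomposition into adjacency scanning, priority-queue operations, and Order Data Structure maintenance, with the respective bounds $O(|E^+|)$, $O(|E^+|\cdot\log|E^+|)$, and $O(|V^+|)$ amortized, is exactly the paper's argument, including the charge of enqueues to incident edges of $V^+$ and the final summation. Your extra care about why each vertex is dequeued only once (processing in $k$-order) only makes explicit what the paper leaves implicit, so this is the same proof.
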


\begin{theorem} The space complexity of the simplified order-based insertion algorithm is $O(n)$ in the worst-case.
\iftrue
\begin{proof}
Each vertex $v$ is assigned three counters that are $v.\core$, $v.\degin$ and $v.\degout$, which requires $O(3n)$ space.
Both $Q$ and $R$ have at most $n$ vertices, respectively, which require worst-case $O(2n)$ space together. 
Two arrays are required for $V^+$ and $V^*$, which requires worst-case $O(2n)$ space. 
All vertices in $\od$ are maintained by Order Data Structure. For this, all vertices are linked by double linked lists, which requires $O(2n)$ space; also, vertices are assigned labels (typically 64 bits integer) to indicate the order, which requires $O(2n)$ space.
Therefore, the total worst-case space complexity is 
$O(3n+2n+2n+2n+2n) = O(n)$.
\end{proof}
\else 
\begin{proof}
Each vertex $v$ is assigned three counters that are $v.\core$, $v.\degin$ and $v.\degout$.
Each vertex $v$ is assigned two flags to indicate $v$ is in $Q$ and $R$ or not, respectively.
Both $Q$ and $R$ have at most $n$ vertices, respectively. 
Two arrays are required for $V^+$ and $V^*$. 
Each vertex $v$ is assigned a color that is \texttt{white, gray} or \texttt{black}. 
All vertices in $\od$ are maintained by Order Data Structure. For this, all vertices are linked by double linked lists,
and vertices are assigned labels (typically 64 bits integer) to indicate the order.
Therefore, the total worst-case space complexity is $O(n)$.
\end{proof}
\fi
\end{theorem}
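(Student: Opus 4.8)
The plan is to bound the auxiliary space by accounting separately for every data structure the insertion routine introduces and maintains, showing each is linear in $n$; since there are only a constant number of such structures, their sum is $O(n)$. I would not charge the adjacency representation of the constructed DAG $\vec G$ itself to the algorithm, as it is the maintained input rather than working space of a single insertion.

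First I would handle the per-vertex scalar fields. Each vertex $v$ stores its core number $v.\core$, its candidate in-degree $v.\degin$, and its remaining out-degree $v.\degout$, together with a color in $\{\texttt{white},\texttt{gray},\texttt{black}\}$ and two Boolean membership indicators recording whether $v$ currently sits in $Q$ and in $R$. Each of these is one machine word per vertex, so they contribute $O(n)$ in total.

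Next I would bound the collection structures. The candidate set $V^*$ and searching set $V^+$ hold at most one entry per vertex, hence $O(n)$ each. The crucial point for the two queues is that no vertex is ever duplicated: in \texttt{Forward} a successor is enqueued into $Q$ only under the guard $v\notin Q$, and in \texttt{DoPre}/\texttt{DoPost} a vertex is enqueued into $R$ only under the guard $v\notin R$. Consequently $|Q|\leq n$ and $|R|\leq n$ at every step, so each queue uses $O(n)$ space. Finally, for the order $\od$ I would invoke the cited linear-space property of the Order Data Structure~\cite{dietz1987two,bender2002two}: maintaining $\od$ amounts to a doubly linked list over all $n$ vertices plus one integer label per vertex, which is again $O(n)$. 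Summing the constantly many $O(n)$ contributions yields the claimed bound.

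The only step that requires genuine care rather than a routine tally is the duplicate-freeness of $Q$ and $R$; everything else is an immediate linear count. I would therefore make explicit that the membership guards are exactly what prevents a vertex from being inserted more than once, which keeps both queues within $O(n)$ even though a vertex's degree — and hence the number of times it may be examined across \texttt{Forward} and \texttt{Backward} — can be large.
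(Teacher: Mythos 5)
Your proposal is correct and follows essentially the same route as the paper's proof: an item-by-item tally of the per-vertex fields, the sets $V^*$ and $V^+$, the queues $Q$ and $R$, and the Order Data Structure, each bounded by $O(n)$. Your explicit observation that the membership guards ($v\notin Q$, $v\notin R$) are what keep the queues duplicate-free is a useful sharpening of a point the paper's proof asserts without justification, but it does not change the overall argument.
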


\subsection{The Simplified Order-Based Removal}
Our simplified order-based removal Algorithm is mostly the same as the original order-based removal Algorithm in \cite{Zhang2017,Saryuce2016}, so that the details are omitted in this section.
The only difference is that our simplified order-based removal algorithm adopts the Order Data Structure to maintain $\od$, instead of the complicated $\mathcal A$ and $\mathcal B$ data structures \cite{Zhang2017}. 
In this case, the worst-case time complexity can be improved as the Order Data Structure only requires amortized $O(1)$ time for each order operation.

\paragraph{Complexities} 
\begin{theorem} The time complexity of the simplified order-based removal algorithm is $O(\DEG(G) + |E^*|)$ in the worst case, where $|E^*| = \sum_{ w\in \it V^*}  {w.d}$. 
\begin{proof}
Typically, the data graph $G$ is stored by adjacent lists. For removing an edge $(u, v)$, all edges of the vertex $u$ and $v$ are sequentially traversed, which require at most $O(\DEG(G))$ time.
We know that $V^+$ includes all traversed vertices to identify the candidate set $V^*$ and $V^*=V^+$ in this algorithm. The vertices in $V^*$ are traversed at most once, which requires worst-case $O(|E^*|)$ time. All vertices in $V^*$ are removed from the $\od_{K}$ and appended to $\od_{K-1}$ in $k$-order, which require $O(|V^*|)$ time as each insert or remove operation only needs amortized $O(1)$ time by the Order Data Structure. 
Since it is possible that $\DEG(G) > |E^*|$ in some cases like $V^*=\emptyset$, the total worst-case time complexity is $O(\DEG(G) + |E^*|+|V^*|) = O(\DEG(G) + |E^*|)$.
\end{proof}
\end{theorem}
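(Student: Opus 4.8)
The plan is to decompose the running time of the removal algorithm into three separately-bounded phases: (i) the initial edge deletion together with the update of the two endpoints' max-core degrees, (ii) the cascading propagation that identifies $V^*$, and (iii) the maintenance of the order $\od$. Since the paper establishes that $V^* = V^+$ for removal, I may identify the set of traversed vertices with the candidate set itself, which simplifies the accounting considerably compared with the insertion case.

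For phase (i), I would argue that removing the edge $(u, v)$ from the adjacency-list representation requires locating and deleting the corresponding entries in both $u.\adj$ and $v.\adj$, and then decrementing $u.\mcd$ and $v.\mcd$. Because either adjacency list can be as long as the maximum degree, this phase costs $O(\DEG(G))$ in the worst case. The crucial observation is that this cost is incurred unconditionally, even when no core number changes (i.e. $V^* = \emptyset$); this is precisely why the final bound must carry a standalone $\DEG(G)$ term rather than having it absorbed into $|E^*|$.

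For phase (ii), I would show that the propagation visits each vertex of $V^*$ at most once---each such vertex is enqueued only when its $\mcd$ first drops below its core number $K$---and that processing a visited vertex $w$ scans $w.\adj$ to decrement the $\mcd$ of the qualifying neighbours with core number $K$. Summing the adjacency-list scans over all $w \in V^*$ gives $\sum_{w \in V^*} w.\Deg = |E^*|$, so this phase costs $O(|E^*|)$. For phase (iii), each vertex of $V^*$ is removed from $\od_K$ and appended to $\od_{K-1}$; invoking the amortized $O(1)$ guarantee for the \texttt{INSERT} and \texttt{DELETE} operations of the Order Data Structure \cite{dietz1987two,bender2002two}, the total order-maintenance cost is $O(|V^*|)$.

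Finally, I would combine the three bounds into $O(\DEG(G) + |E^*| + |V^*|)$ and absorb the $|V^*|$ term, noting that $|V^*| \le |E^*|$ whenever $V^*$ is nonempty and $|V^*| = 0$ otherwise, which yields the claimed $O(\DEG(G) + |E^*|)$. I expect the main obstacle to be not any single bound but the clean justification that the $\DEG(G)$ term must survive as a separate additive summand: I would emphasise the degenerate case $V^* = \emptyset$, where $|E^*| = 0$ and yet the two endpoint adjacency lists must still be touched, to make the necessity of the $\DEG(G)$ term explicit and to explain why it cannot simply be folded into the traversal cost.
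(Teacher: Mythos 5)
Your proposal follows essentially the same route as the paper's proof: the same three-part decomposition into the $O(\DEG(G))$ cost of deleting the edge from the adjacency lists, the $O(|E^*|)$ cost of the propagation using $V^*=V^+$, and the $O(|V^*|)$ cost of order maintenance via the amortized $O(1)$ Order Data Structure operations, combined into $O(\DEG(G)+|E^*|)$. Your added emphasis on the degenerate case $V^*=\emptyset$ to justify keeping $\DEG(G)$ as a separate summand matches the paper's own remark and is correct.
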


\begin{theorem} The space complexity of the simplified order-based removal algorithm is $O(n)$ in the worst-case.
\begin{proof}
For each vertex $v$ in the graph, $v.\mcd$ is used to identify the $V^*$, which requires $O(n)$ space. All vertices in $\od$ are maintained by Order Data Structure, which requires $O(4n)$ space. A queue is used for the propagation, which require worst-case $O(n)$ space. One array is required for $V^*$, which requires $O(n)$ space. Therefore, the total worst-case space is $O(n+4n+n+n)=O(n)$. 
\end{proof}
\end{theorem}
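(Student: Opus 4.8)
The plan is to account for every auxiliary structure that the removal algorithm allocates and to show that each occupies at most $O(n)$ space; because only a constant number of such structures appear, their sum remains $O(n)$. First I would enumerate the per-vertex data. The algorithm keeps, for each vertex $v$, its core number $v.\core$ and its max-core degree $v.\mcd$, the latter being the quantity used to decide membership in $V^*$. Together with a constant number of status or membership flags, these scalar fields contribute $O(n)$ overall.

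Next I would charge the maintenance of $\od$ to the Order Data Structure. By the linear-space guarantee recalled earlier in the excerpt, storing the total order costs $O(n)$: the vertices are threaded through doubly linked lists (a constant number of pointers per vertex) and each vertex is tagged with an integer label enabling $O(1)$ comparisons. Both the pointer overhead and the label array are $O(n)$, so order maintenance is $O(n)$.

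Finally I would bound the working sets of the peeling propagation. The queue that drives the propagation holds each affected vertex at most once, so it never exceeds $n$ entries and needs $O(n)$ space; the array representing $V^*$ likewise stores at most $n$ vertices and is $O(n)$. Summing the per-vertex fields, the Order Data Structure, the propagation queue, and the candidate array yields a constant number of $O(n)$ terms, hence $O(n)$ in total.

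I do not anticipate a genuine obstacle here: the only points that need care are to invoke---rather than re-derive---the linear-space property of the Order Data Structure, and to confirm that no structure can hold more than $n$ entries, which follows because each vertex enters the propagation queue and $V^*$ at most once during a single edge removal.
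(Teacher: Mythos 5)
Your proposal is correct and follows essentially the same route as the paper's proof: it enumerates the same constant collection of linear-size structures (the per-vertex $\mcd$ field, the Order Data Structure's linked lists and labels, the propagation queue, and the array for $V^*$) and sums them to $O(n)$. No gaps.
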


\iftrue 

\section{The Simplified Order-Based Batch Insertion}
In practice, it is common that a great number of edges are inserted into a graph together.
If multiple edges are inserted one by one, the vertices in $\vgr$ may be repeatedly traversed. 
Instead of inserting one by one, we can handle the edge insertion in batch.
In this section, we extend our simplified order-based unit insertion algorithm to batch insertion.

Let $\Delta G=(V, \Delta E)$ be an inserted graph to a constructed DAG $\vec G$. That is, $\Delta E(\Delta G)$ contains a batch of edges that will be inserted to $\vec G$. Each edge $u\mapsto v \in \Delta E$ satisfies $u\preceq v$ in the $k$-order of $\vec G$.  



\begin{theorem}
\label{th:batch-insert}
After inserted a graph $\Delta G=(V, \Delta E)$ to constructed DAG $\vec G=(V, \vec E)$, the core number of a vertex $v\in V(\vec G)$ increases by at most $1$ if $v$ satisfies $|v.post| \leq v.\core + 1$.
\begin{proof}
For each $v\in V(\vec G)$, Lemma \ref{lm:outdegree} proves that the out-degree of $v$ satisfies $|v.\post|\leq v.\core$. Analogies, when inserting $\Delta G$ into $\vec G$ with $|v.\post| \leq v.\core+1$, the core number can be increased by at most $1$, as after inserting the new graph has to satisfy $v.d_{out} \leq v.\core$ for all vertices $v\in V(\vec G)$. 
\end{proof}
\end{theorem}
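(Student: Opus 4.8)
The plan is to lift the peeling argument behind Lemma~\ref{lm:outdegree} from the single-edge case (Theorem~\ref{thr:core-number}) to the batch setting, using the out-degree hypothesis as the device that caps the growth of the core number. Writing $K = v.\core$ for the core number before the batch, I want to show that the hypothesis $|v.\post| \le K+1$ (where $v.\post$ is taken after the whole batch $\Delta E$ has been oriented by the current $k$-order) forces the new core number to satisfy $v.\core' \le K+1$; since inserting edges can never decrease a core number, this is exactly ``increases by at most $1$''.

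First I would re-read Lemma~\ref{lm:outdegree} as an invariant of \emph{any} valid constructed DAG: when BZ peels a vertex, all its predecessors are already gone, so the residual degree at that moment is precisely its out-degree, and this residual degree is what the core number is read off from. I would then run the BZ peeling process on the augmented graph $\vec G \cup \Delta G$ to obtain the new core numbers $v.\core'$ together with a new topological order $\preceq'$, and apply Lemma~\ref{lm:outdegree} to this re-oriented DAG, giving $d'(v) \le v.\core'$, where $d'(v)$ denotes the out-degree of $v$ under $\preceq'$.

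Second, I would argue by contradiction. Suppose $v.\core' \ge K+2$. Then $v$ lies in the $(K+2)$-core of $\vec G \cup \Delta G$, so when $v$ is peeled under $\preceq'$ its residual degree $d'(v)$ is at least $K+2$. The goal is to contradict this by bounding $d'(v)$ in terms of the hypothesized quantity $|v.\post|$. Because every edge of $\Delta E$ already respects the current $k$-order, the only way $d'(v)$ can exceed $|v.\post|$ is through original edges whose orientation is reversed when passing from $\preceq$ to $\preceq'$, and each such reversal ought to be paid for by a unit increase of a core number along that edge.

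The hard part will be precisely this last bookkeeping step: the new order $\preceq'$ may flip the direction of some edges, so $d'(v)$ is not literally $|v.\post|$, and I must establish that every edge contributing to $d'(v)$ but not to $|v.\post|$ is accompanied by a unit core-number increase, so that the ``excess'' out-degree above $K$ can never reach $2$. I expect the cleanest way to close this gap is either a monotonicity/exchange argument showing that $\preceq'$ can be chosen so as to reverse an original edge only when both endpoints' core numbers rise, or a direct appeal to the fact that the core-number labeling is the unique maximal labeling under which the Lemma~\ref{lm:outdegree} invariant holds; the remainder is then a faithful transcription of the peeling argument used for Lemma~\ref{lm:outdegree}.
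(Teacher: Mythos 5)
Your proposal is not yet a proof: you correctly isolate the step that makes the batch case harder than Lemma~\ref{lm:outdegree} --- re-peeling $\vec G\cup\Delta G$ produces a new order $\preceq'$ that may reverse original edges, so the new out-degree $d'(v)$ is not controlled by the hypothesized $|v.\post|$ --- but you then leave exactly that step open, offering two candidate strategies without carrying either out. Moreover, the step you do commit to is wrong: from $v.\core'\ge K+2$ you infer that the residual degree of $v$ when it is peeled under $\preceq'$ is at least $K+2$. Lemma~\ref{lm:outdegree} is only an upper bound, and it can be strict (in a triangle the $\preceq$-last vertex has out-degree $0$ and core number $2$), so the chosen $v$ may be peeled late in its new subcore with small residual degree, and your contradiction never gets off the ground. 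There is also a modelling issue you inherit from the theorem statement: the hypothesis must be invoked at a vertex other than the given $v$. If only $v$ satisfies $|v.\post|\le v.\core+1$ the claim is false --- a star center $c$ has $|c.\post|=0$, yet inserting all edges among the leaves (each oriented by $\preceq$) turns the graph into $K_6$ and raises $c.\core$ from $1$ to $5$. The statement is only correct when every vertex satisfies the bound after the batch is oriented, which is what Algorithm~\ref{alg:batch-insert} actually enforces.

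The repair avoids $\preceq'$ altogether, so the reversal bookkeeping you worried about never arises. Suppose some vertex gains at least $2$; let $K$ be the smallest old core number among such vertices, let $C$ be the $(K+2)$-core of $G\cup\Delta E$, and let $A\subseteq C$ be the vertices of $C$ with old core number exactly $K$ (minimality of $K$ forces every vertex of $C$ with old core number at most $K$ into $A$, and $A\neq\emptyset$). Take $w_0$ the $\preceq$-first element of $A$. Every neighbor $x$ of $w_0$ inside $C$ satisfies $w_0\preceq x$: either $x\in A$ and $w_0$ is $\preceq$-first there, or $x$ has old core number at least $K+1$ and hence follows all of $\od_K$. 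Since both the original edges and the edges of $\Delta E$ are oriented by $\preceq$, all of the at least $K+2$ neighbors of $w_0$ in $C$ lie in $w_0.\post$, contradicting $|w_0.\post|\le w_0.\core+1=K+1$. For comparison, the paper's own proof is a one-sentence analogy with Lemma~\ref{lm:outdegree} that silently skips the same reorientation issue you flagged; the argument above is the missing content in both your proposal and the paper's.
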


Theorem \ref{th:batch-insert} suggests that each round we can insert multiples edges $u\mapsto v\in \Delta E$ into $\vec G$ only if $|u(\vec G).\post|\leq u(\vec G).core+1$; otherwise, $u\mapsto v$ has to be inserted in next round until all edges are inserted. In the worst-case, there are $\DEG(\Delta G)$ round required if each edges $u\mapsto v\in \Delta E$ satisfy $u(\vec G).\degout = u(\vec G).\core$. 



\paragraph{The Algorithm} Algorithm \ref{alg:batch-insert} shows the detailed steps.
A batch of edges $u\mapsto v\in \Delta E$ can be inserted into $\vec G$ only if $u.\degout \leq u.\core$ (lines 3 and 4).
When $u.\degout = u.\core + 1$, we can put $u$ into the Min-Priority Queue $Q$ for propagation (line 5). Of course, the inserted edges are removed from $\Delta G$ (line 6). 
After all possible edges are inserted, the propagation is the same as in lines 5 - 10 of Algorithm \ref{alg:insert} (line 7), where $K$ is the core numbers of local $k$-subcore with $K=u.\core \leq v.\core$ for an inserted edge $u\mapsto v$. 
This process repeatedly continues until the $\Delta G$ becomes empty (line 1).

\begin{algorithm}[htb]
\small
\SetAlgoNoEnd
\caption{BatchEdgeInsert($\vec G, \od, \Delta G$)}
\label{alg:batch-insert}
\DontPrintSemicolon
\SetKwInOut{Input}{input}\SetKwInOut{Output}{output}
\SetKwFunction{Min}{min}
\SetKwFunction{Forward}{Forward}
\SetKwFunction{Backward}{Backward}
\SetKwFunction{Ending}{Ending}
\SetKwFunction{DoDeg}{DoDeg}
\SetKwData{True}{true}
\SetKwData{False}{false}
\SetKwData{Black}{black}
\SetKwData{Dark}{dark}
\SetKwData{Gray}{gray}
\SetKwData{White}{white}
\SetKwData{Red}{red}
\SetKw{Break}{break}
\SetKw{With}{$:$}
\SetKw{WITH}{with}
\SetKw{IN}{in}
\SetKwFunction{}{}
\SetKw{Continue}{continue}
\SetKw{Break}{break}
\Input{A constructed DAG $\vec G=(V,\vec E)$; the corresponding $\od$; An inserted graph $\Delta G=(V, \Delta E)$. }
\Output{A updated DAG $\vec G(V, \vec E)$; A updated $\od$.}

    

        

\While{$\Delta G \neq \emptyset$}{
    $V^*, V^+, Q \gets \emptyset, \emptyset$, a min-priority queue by $\od$\;
    
    
    \For{$u \mapsto v \in \Delta E(\Delta G)$ \With $u.\degout \leq u.\core$}{
        insert $u\mapsto v$ into $\vec G$ with $u.\degout \gets u.\degout+1$\;
        
        \lIf{$u.\degout = u.\core+1$}{$Q.enqueu(u)$}
        remove $u\mapsto v$ from $\Delta G$\;
        
    }
    same code as lines 5 - 10 in Algorithm \ref{alg:insert} with $K$ as the core number of local subcore\;
}

\end{algorithm} 

\begin{figure}[htbp]
\centering
\includegraphics[scale=0.4]{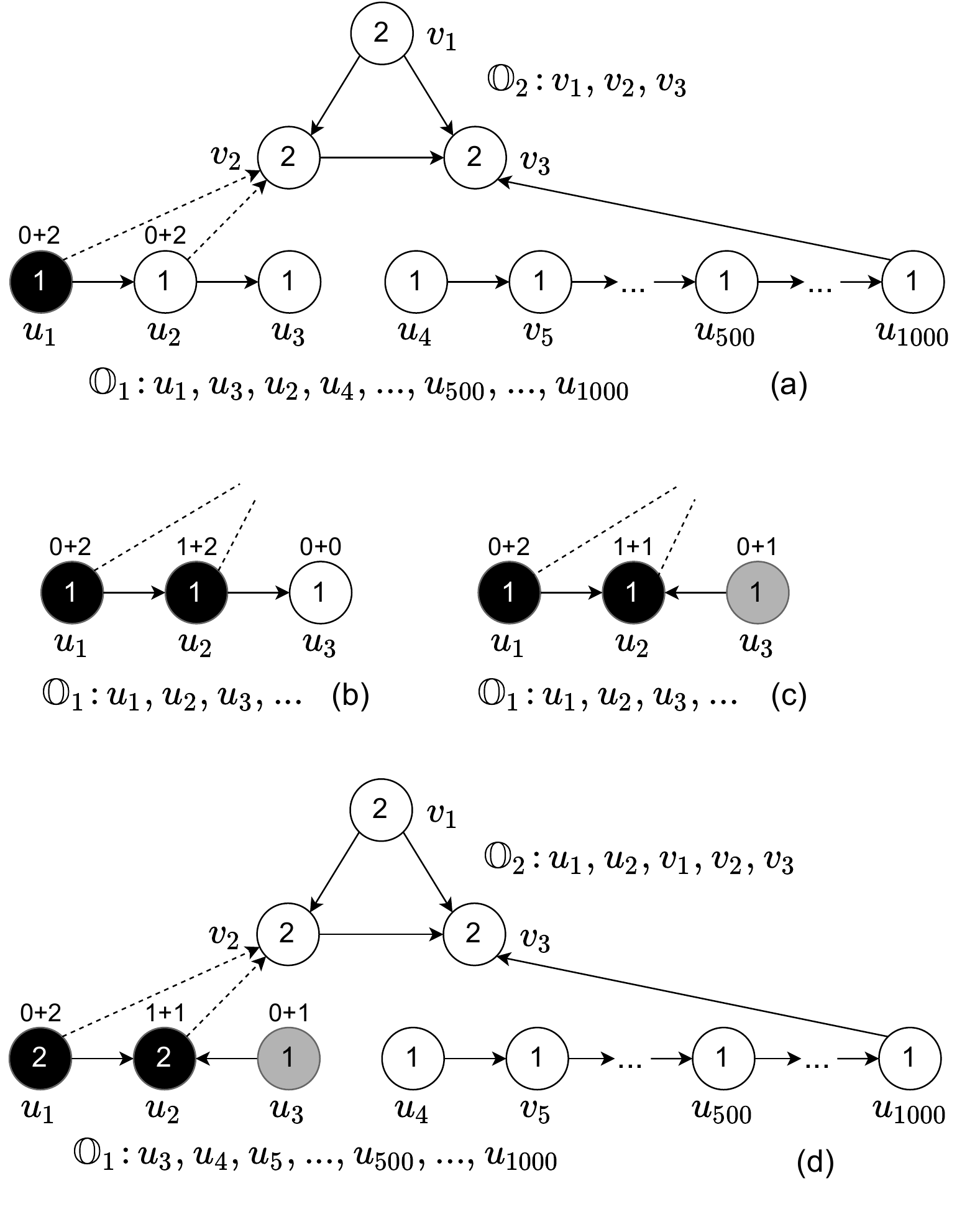}
\caption{Insert a batch of two edges $u_1\mapsto v_{2}$ and $u_2\mapsto v_2$ to a constructed graph $\vec G$ obtained from Figure \ref{fig:order}.}
\label{fig:batch-insert}
\end{figure}



\begin{example}
Consider inserting two edges in the constructed graph in Figure \ref{fig:batch-insert}. Initially, the Min-Priority Queue $Q$ is empty and $K$ is the core number of the corresponding $k$-subcore.
In Figure~\ref{fig:batch-insert}(a), after inserting two edges, $u_1 \mapsto v_2$ and $u_2 \mapsto v_2$, we get $u_1.\degout = K + 1 = 2$ and $u_2.\degout = K + 1 = 2$, so that these two edges can be inserted in batch and we put $u_1$ and $u_2$ in $Q$ as $Q=\{u_1, u_2\}$. We begin to propagate $Q$. 
First, in Figure~\ref{fig:batch-insert}(a), $u_1$ is removed from $Q$ to do the \texttt{Forward} procedure since $u_1.\degin + u_1.\degout = 0+2 > K = 1$, by which $u_1$ is colored by \texttt{black}; within subcore \texttt{sc}$(u_1)$, all $u_1.\post$'s $\degin$ are added by $1$, and all $u_1.\post$ are put in $Q$ as $Q=\{u_2\}$. 
Second, in Figure~\ref{fig:batch-insert}(b), $u_2$ is removed from $Q$ to do the \texttt{Forward} procedure since $u_2.\degin + u_2.\degout = 0 + 2 > K = 1$, by which $u_2$ is colored by \texttt{black}; within subcore \texttt{sc}$(u_2)$, all $u_2.\post$'s $\degin$ are added by $1$, and all $u_2.\post$ are put in $Q$ as $Q=\{u_3\}$.
Third, in Figure \ref{fig:batch-insert}(c), $u_3$ is removed from $Q$ to do the \texttt{Backward} procedure since $u_3.\degin + u_3.\degout = 1+0 \leq K = 1$, by which $u_3$ is colored by \texttt{black} and $u_2.\degout$ off by 1; however, since $u_2.\degin+u_2.\degout=1+1 > K = 1$, we have $u_2$ still \texttt{black} and the \texttt{Backward} procedure terminates.
Finally, in Figure \ref{fig:batch-insert}(d), two \texttt{black} vertices, $u_1$ and $u_2$, have increased core numbers as $2$; then, they are removed from $\od_1$ and inserted before the head of $\od_2$ to maintain the $k$-order. 

In this example, we have $V^*=\{u_1, u_2\}\land V^+=\{u_1,u_2,u_3\}$ by batch inserting two edges together. If we insert $u_1\mapsto v_2$ first and then insert $u_2\mapsto v_2$ second, the final $V^*$ is same; but $V^+$ is $\{u_1, u_2, u_3\}$ and $\{u_2, u_3\}$ for two inserted edges, respectively. In this case, both $u_2$ and $u_3$ are repeatedly traversed, which can be avoided by batch insertion.
\end{example}

\paragraph{Correctness} For each round of the while-loop (lines 2 - 7), the correctness argument is totally the same as the single edge insertion in Algorithm \ref{alg:insert}. 

\paragraph{Complexities} The total worst-case running time of line 7 is the same as in Algorithm \ref{alg:insert}, which is $O(|E^+|\cdot \log|E^+|)$. Typically, for $Q$, the running time of enqueue and dequeue are larger than in Algorithm \ref{alg:insert} since each time numerous vertices can be initially added into $Q$ for propagation (line 5).
The outer while-loop (line 1) run at most $\Delta E$ rounds, so that $\Delta E$ is checked at most $O(\DEG(\Delta G)\cdot |\Delta E|)$ round as $\od$ can be changed and thus the directions of edges in $\Delta E$ can be changed. 
Typically, the majority of edges can be inserted in the first round of while-loop.
Therefore, the time complexity of Algorithm \ref{alg:batch-insert} is $O(|E^+|\cdot \log|E^+| + \DEG(\Delta G)\cdot |\Delta E|)$ in the worst case.

The space complexity of Algorithm \ref{alg:batch-insert} is the same as Algorithm \ref{alg:insert}.

\fi

\section{Related Work}

\paragraph{Core Decomposition} 
In \cite{cheng2011efficient}, Cheng et al. propose an external memory algorithm, so-called EMcore, which runs in a top-down manner such that the whole graph does not have to be loaded into memory. In \cite{wen2016efficient}, Wen et al. provide a semi-external algorithm, which requires $O(n)$ size memory to maintain the information of vertices. In \cite{khaouid2015k}, Khaouid et al. investigate the core decomposition in a single PC over large graphs by using  \texttt{GraphChi} and \texttt{WebGraph} models. In \cite{montresor2012distributed}, Montresoret et al. consider the core decomposition in a distributed system. 
In addition, Parallel computation of core decomposition in multi-core processors is first investigated in \cite{dasari2014park}, where the ParK algorithm was proposed. Based on the main idea of ParK, a more scalable PKC algorithm has been reported in \cite{kabir2017parallel}.

\paragraph{Core Maintenance} 
In~\cite{li2013efficient}, an algorithm that similar to traversal algorithm~\cite{Saryuce2016} is given, but this solution has quadratic time complexity.
In \cite{wen2016efficient}, a semi-external algorithm for core maintenance is proposed in order to reduce the I/O cost, but this method is not optimized for CUP time. 
In~\cite{wang2017parallel,Jin2018}, parallel approaches for core maintenance is proposed for both edge insertion and removal. 
There exists some research based on core maintenance. In~\cite{yu2021querying}, the authors study computing all $k$-cores in the graph snapshot over the time window. In~\cite{lin2021hierarchical}, the authors explore the hierarchy core maintenance.

\section{Experiments}
In this section, we conduct experimental studies using 12 real and synthetic graphs and report the performance of our algorithm by comparing with the original order-based method:
\begin{itemize} [leftmargin=*]
    \item [--] The order-based algorithm \cite{Zhang2017} with unit edge insertion (\texttt{I}) and edge removal (\texttt{R}); Before running, we execute the initialization (\texttt{Init}) step. 
    \item [--] Our simplified order-based with unit edge insertion (\texttt{OurI}) and edge removal (\texttt{OurR}); Before running, we execute the initialization (\texttt{OurInit}) step.
    \item [--] Our simplified order-based batch edge insertion (\texttt{OurBI}).
\end{itemize}
The experiments are performed on a desktop computer with an Intel CPU (4 cores, 8 hyperthreads, 8 MB of last-level cache) and 16 GB main memory. 
The machine runs the Ubuntu Linux (18.04) operating system. All tested algorithms are implemented in C++ and compiled with g++ version 7.3.0 with the -O3 option. All implementations and results are available at github\footnote{\url{https://github.com/Itisben/SimplifiedCoreMaint.git}}. 

\paragraph{Tested Graphs}
We evaluate the performance of different methods over a variety of real-world and synthetic graphs, which are shown in Table \ref{tb:graph}. For simplicity, directed graphs are converted to undirected ones in our testing; all of the self-loops and repeated edges are removed, that is, a vertex can not connect to itself and each pair of vertices can connect with at most one edge. 
The \emph{livej}, \emph{patent}, \emph{wiki-talk}, and \emph{roadNet-CA} graphs are obtained from SNAP\footnote{\url{http://snap.stanford.edu/data/index.html}}. 
The \emph{dbpedia}, \emph{baidu}, \emph{pokec} and \emph{wiki-talk-en} \emph{wiki-links-en} graphs are collected from the KONECT\footnote{\url{http://konect.cc/networks/}} project. 
The \emph{ER}, \emph{BA}, and \emph{RMAT} graphs are synthetic graphs; they are generated by the SNAP\footnote{\url{http://snap.stanford.edu/snappy/doc/reference/generators.html}} system using Erd\"{o}s-R\'enyi, Barabasi-Albert, and the R-MAT graph models, respectively. For these generated graphs, the average degree is fixed to 8 by choosing 1,000,000 vertices and 8,000,000 edges.

\begin{table}[!htb]
\small
\caption{Tested real and synthetic graphs.}
\label{tb:graph}
\begin{tabular}{l|rrrr}
\toprule
Graph & $n$ & $m$ & AvgDeg & Max $k$ \\ 
\midrule
livej & 4,847,571 & 68,993,773 & 14.23 & 372 \\
patent & 6,009,555 & 16,518,948 & 2.75 & 64 \\
wikitalk & 2,394,385 & 5,021,410 & 2.10 & 131 \\
roadNet-CA & 1,971,281 & 5,533,214 & 2.81 & 3 \\ \hline
dbpedia & 3,966,925 & 13,820,853 & 3.48 & 20 \\
baidu & 2,141,301 & 17,794,839 & 8.31 & 78 \\
pokec & 1,632,804 & 30,622,564 & 18.75 & 47 \\
wiki-talk-en & 2,987,536 & 24,981,163 & 8.36 & 210 \\
wiki-links-en & 5,710,993 & 130,160,392 & 22.79 & 821 \\ \hline
ER & 1,000,000 & 8,000,000 & 8.00 & 11 \\
BA & 1,000,000 & 8,000,000 & 8.00 & 8 \\
RMAT & 1,000,000 & 8,000,000 & 8.00 & 237 \\
\bottomrule
\end{tabular}
\end{table}

In Table \ref{tb:graph}, we can see all graphs have millions of edges, their average degrees are ranged from 2.1 to 22.8, and their maximal core numbers range from 3 to 821. For each tested graph, the distribution of core numbers for all the vertices is shown in Figure \ref{fig:graph}, where the x-axis is the core numbers, and the y-axis is the size of vertices. 
For most graphs, many vertices have small core numbers, and few have large core numbers. 
Specifically, \emph{wiki-link-en} has the maximum core numbers up to $821$, so that for most of its $\od_k$ the sizes are around $1000$; \emph{BA} has a single core number as $8$ so that all vertices are in the single $\od_k$.
Since all vertices with core number $k$ in $\od_k$ are maintained in $k$-order, the size of $\od_k$ is related to the performance of different methods.

\begin{figure}[!htb]
\centering
\includegraphics[scale=0.4]{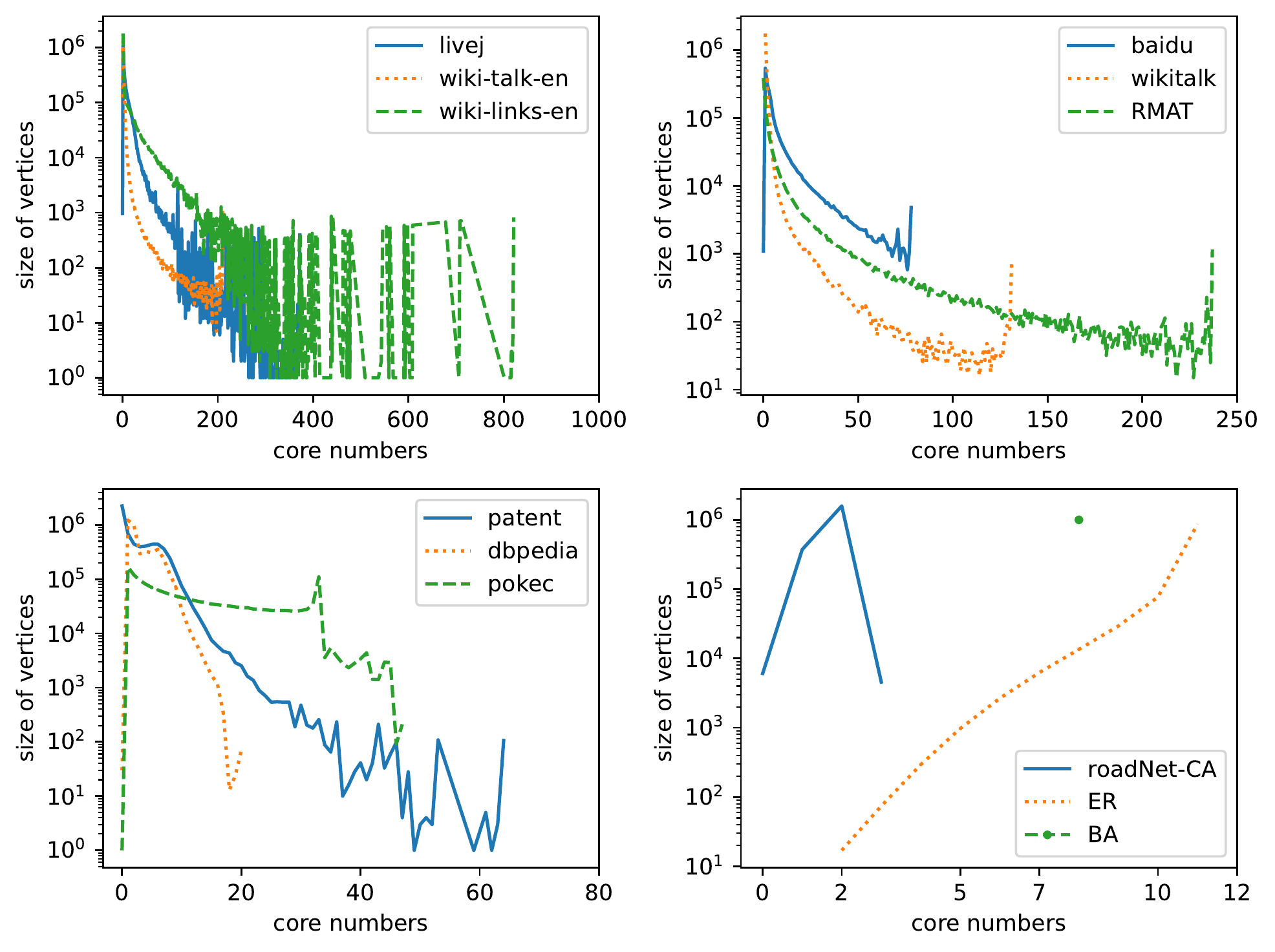}
\centering
\caption{The distribution of core numbers.}
\label{fig:graph}
\end{figure}

\subsection{Running Time Evaluation}
In this experiment, we compare the performance of our simplified order-based method (\texttt{OurI} and \texttt{OurR}) with the original order-based method (\texttt{I} and \texttt{R}). For each tested graph, we first randomly select 100,000 edges out of each tested graph. 
For each graph, we measure the accumulated time for inserting or removal these 100,000 edges. 
Each test runs at least 50 times, and we calculate the means with 95\% confidence intervals.

The results for edge insertion are shown in Figure \ref{fig:time}(a). We can see \texttt{OurI} outperforms \texttt{I} over all tested graphs. Specifically, Table \ref{tb:time} shows the speedups of \texttt{OurI} vs. \texttt{I}, which ranges from 1.29 to 7.69. 
The reason is that the sequence $\od_k$ in $k$-order is maintained separately for each core number $k$.
Each time insert $v$ into or remove $v$ from $\od_k$, \texttt{OurI} only requires worst-case $O(1)$ amortized time while \texttt{I} requires worst-case $O(\log |\od_k|)$ time. 
Therefore, over \emph{BA} we can see \texttt{OurI} gains the largest speedup as $7.69$ since all vertices have single one core number with $|\od_8| = 8,000,000$; over \emph{wiki-links-en} we can see \texttt{OurI} gains the smallest speedup as $1.29$ since vertices have core numbers ranging from $0$ to $821$ such that a large portion of order lists has $|\od_k|$ around $1000$. 

\begin{figure}[!htb]
\centering
\begin{subfigure}[b]{0.45\textwidth}
\centering
\includegraphics[scale=0.4]{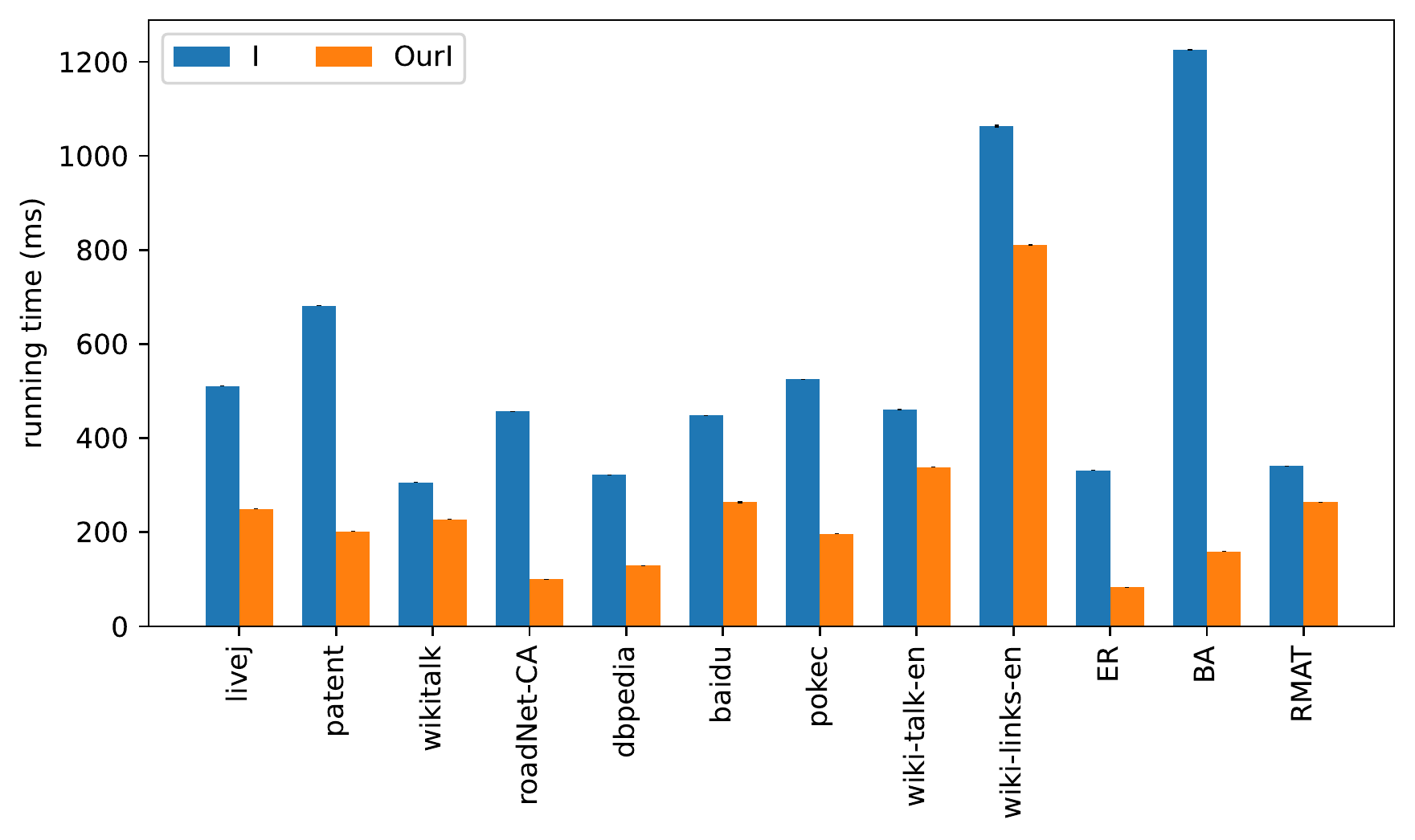}
\caption{Edge Insertion}
\label{fig:time-insert}
\end{subfigure}
\begin{subfigure}[b]{0.45\textwidth}
\centering
\includegraphics[scale=0.4]{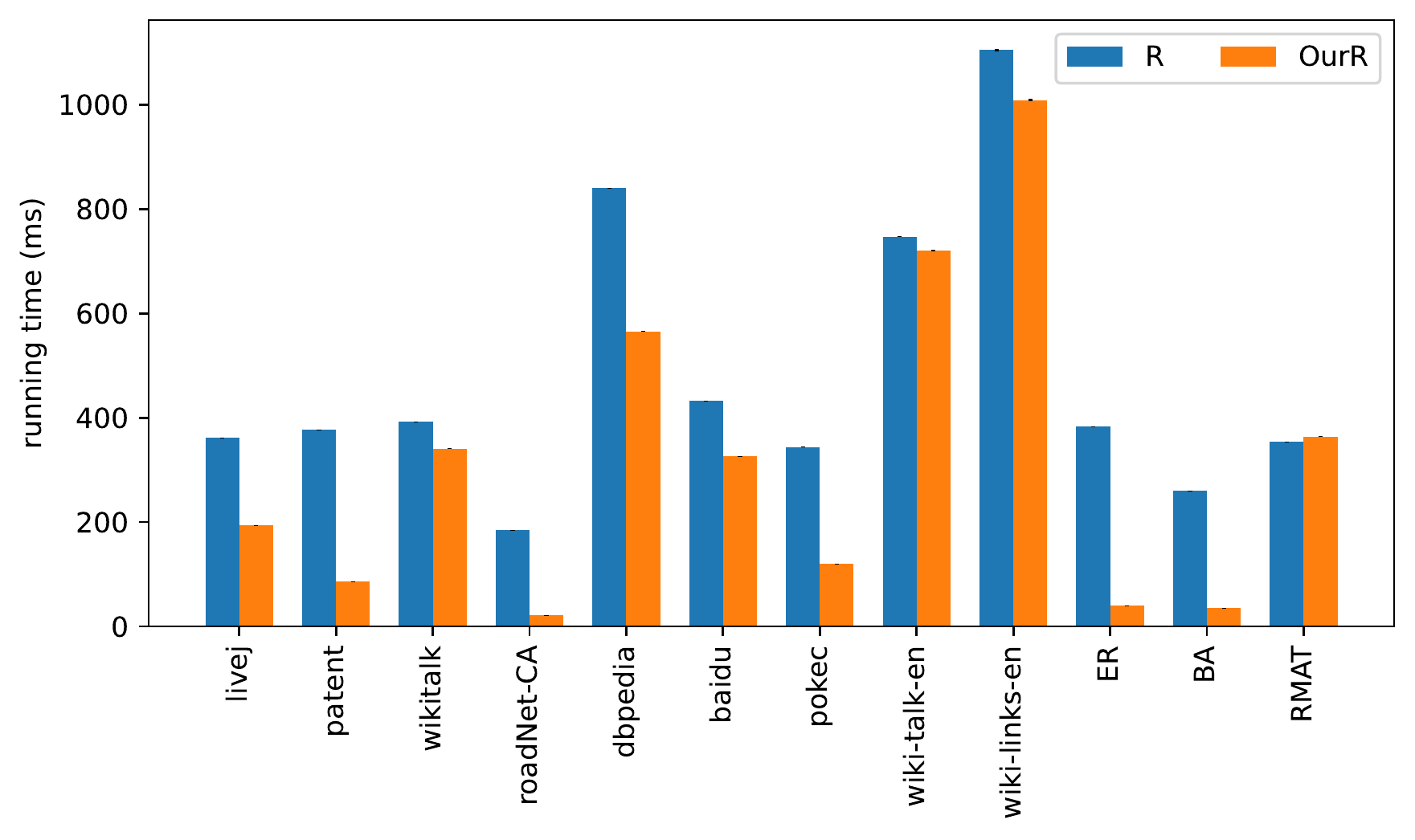}
\caption{Edge Removal}
\label{fig:time-remove}
\end{subfigure}
\caption{Compare the running times of two methods.}
\label{fig:time}
\end{figure}

Similarly, we can see that the edge removal has almost the same trend of speedups in Figure \ref{fig:time}(b), which ranges from $1.16$ to $5.26$ in Table \ref{tb:time}. However, we observe that the speedups of removal may be less than the insertion over most of the graphs. 
The reason is that the edge removal requires fewer order operations compared with the edge insertion. That is, unlike the edge insertion, it is not necessary to compare the $k$-order for two vertices by \texttt{ORDER}$(\od,x,y)$ when reversing the vertices.
The main order operations are \texttt{REMOVE}$(\od, x)$ and then \texttt{INSERT}$(\od,x, y)$, when the core numbers of vertices $x\in V^*$ are off by $1$.

In Table \ref{tb:time}, for the batch insertion, we can see the speedups of \texttt{OurBI} vs. \texttt{I} is much less then the speedups of \texttt{OurI} vs. \texttt{I}, although \texttt{OurBI} may have smaller size of $V^+$ than \texttt{OurI}. 
One reason is that \texttt{OurBI} have to traverse inserted graph $\Delta G$ at most $\DEG(\Delta G)$ round, which is the maximum degree of the inserted graph $\Delta G$. 
The other reason is that compared with \texttt{OurI}, \texttt{OurBI} has larger size of priority queue $Q$, which \texttt{OurBI} requires more running time on enqueue and dequeue operations. 

In Table \ref{tb:time}, we also observe that the speedups of \texttt{OurInit} vs. \texttt{Init} is a little larger than $1$. The reason is that for initialization, most of the running time is spent on computing the core number for all vertices by the BZ algorithm. After running the BZ algorithm, \texttt{OurInit} assigns labels for all vertices to construct $\od$ in $k$-order, which requires worst-case $O(n)$ time. However, \texttt{Init} has to add all vertices to binary search trees, which requires worst-case $O(n\log n)$ time.

\begin{table}[!htb]
\caption{Compare the speedups of our method for all graphs.}
\label{tb:time}
\small
\begin{tabular}{l|cccc}
\toprule
Graph & \texttt{OurI} vs \texttt{I} & \texttt{OurBI} vs \texttt{I} & \texttt{OurR} vs \texttt{R}  & \texttt{OurInit} vs \texttt{Init}\\ 
\midrule
livej & 2.04 & 1.66 & 1.87 & 1.02 \\
patent & 3.37 & 2.68 & 4.41 & 1.04 \\
wikitalk & 1.34 & 1.63 & 1.15 & 1.26 \\
roadNet-CA & 4.51 & 2.95 & 8.56 & 1.17 \\ \hline
dbpedia & 2.49 & 2.14 & 1.49 & 1.08 \\
baidu & 1.70 & 1.68 & 1.33 & 1.04 \\
pokec & 2.67 & 2.37 & 2.87 & 1.03 \\
wiki-talk-en & 1.36 & 1.45 & 1.04 & 1.20 \\
wiki-links-en & 1.31 & 1.16 & 1.09 & 1.02 \\ \hline
ER & 3.97 & 2.76 & 9.72 & 1.08 \\
BA & 7.69 & 5.26 & 7.42 & 1.15 \\
RMAT & 1.29 & 1.31 & 0.97 & 1.09 \\
\bottomrule
\end{tabular}
\end{table}

\subsection{Stability Evaluation}

We test the stability of different methods over two selected graphs, i.e. \emph{wikitalk} and \emph{dbpedia} as follows. 
First, we randomly sample $5,000,000$ edges and partition into $50$ groups, where each group has totally different $100,000$ edges. Second, for each group, we measure the accumulated running time of different methods. That is, the experiments run $50$ times and each time has totally different inserted or removed edges. 

Figure \ref{fig:stability} shows the results over two selected graphs. We can see that \texttt{OurI} and \texttt{OurR} outperform \texttt{I} and \texttt{R}, respectively. More important, the performance of \texttt{OurI} and \texttt{OurR} is as well-bounded as \texttt{I} and \texttt{R}, respectively.
The reason is that \texttt{I} and \texttt{R} are well-bounded as the variation of $V^+$ is small for different inserting or removal edges; also, \texttt{OurI} and \texttt{OurR} have the same size of traversed vertices $V^+$ and thus have similar well-bounded performance. 
\begin{figure}[htb]
\centering
\includegraphics[scale=0.4]{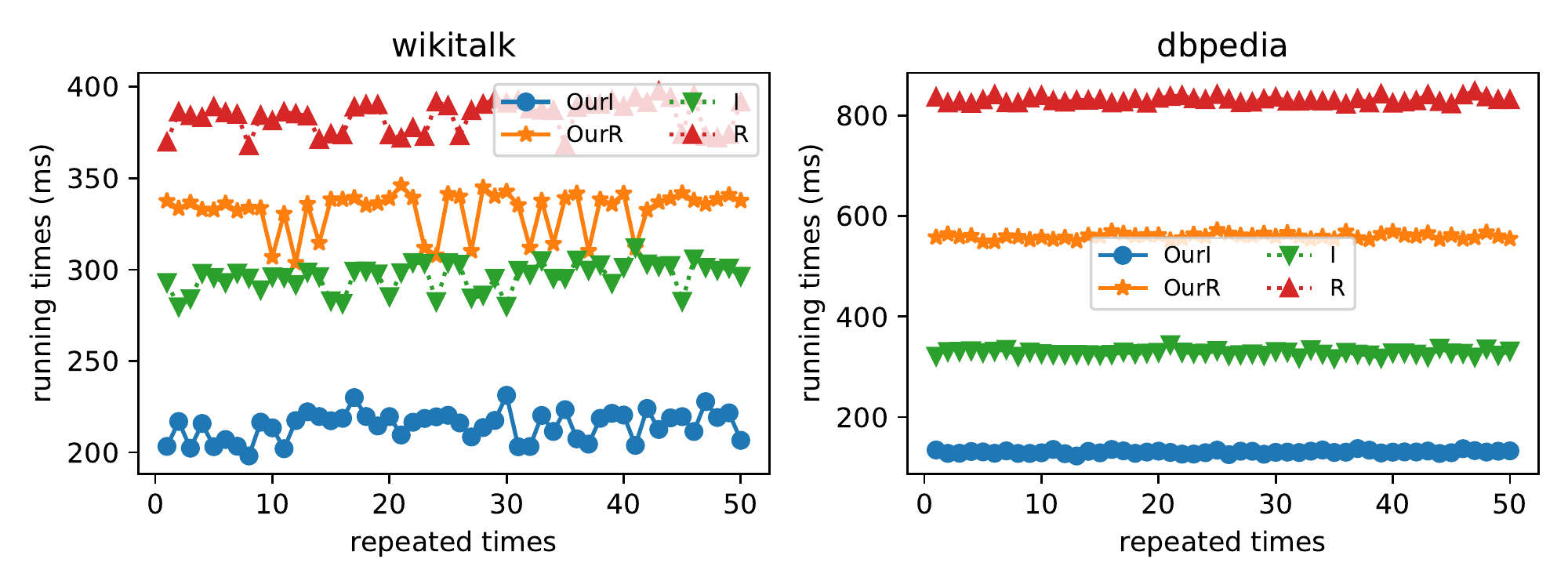}
\centering
\caption{The stability of all methods over selected graphs.}
\label{fig:stability}
\end{figure}

\subsection{Scalability Evaluation}

We test the scalability of different methods over two selected graphs, i.e., \emph{wikitalk} and \emph{dbpedia}. 
We vary the number of edges exponentially by randomly sampling from 100,000 to 200,000, 400,000, 800,000, 1,600,000, etc. We keep incident vertices of edges for each sampling to generate the induced subgraphs. 
Over each subgraph, we further randomly selected 100,000 edges for insertion or removal. 
For example, over \emph{wikitalk}, the first subgraph has 100,000 edges, all of which can be inserted or removed; the last subgraph has 3,200,000 edges, only 100,000 of which can be inserted or removed.
Over each subgraph, we measure the accumulated time for insertion or removal these 100,000 edges. Each test runs at least 50 times, and we calculate the average running time. 

We show the result in Figure \ref{fig:scala}, where the x-axis is the number of sampled edges in subgraphs increasing exponentially, and the y-axis is the running times (ms) for different methods by inserting or removing 100,000 edges. 
Table \ref{tb:scalability} shows the details of scalability evaluation, where $m'$ is the number of edges in subgraphs, $\nlb$~is the number of updated labels used by the Order Data Structures of our methods, and $\nrp$~is the number of outer while-loop repeated rounds for \texttt{OurBI}. 
We make several observations as follows: 

- In Figure \ref{fig:scala}, a first look reveals that the running time of \texttt{OurI} grow more slowly compared with \texttt{I}. The reason is that \texttt{OurI} improves the worst-case running time of each order operation of $\od_k$ from $O(\log |\od_k|)$ to $O(1)$. In this case, the larger sampled graphs have the larger size of $\od_k$, which can lead to the higher speedups.
However, we can see that the running time of \texttt{OurR} always grows with a similar trend compared with \texttt{R}. The reason is that a large percentage of the running time is spent on removing edges from the adjacent lists of vertices, which requires traversing all the corresponding edges. Because of this, even \texttt{OurR} has more efficient order operations for $\od$ than \texttt{R}, the speedups are not obvious.

- In Figure \ref{fig:scala}, we observe that \texttt{OurBI} sometimes runs faster then \texttt{OurI}. The reason is as follow. From Table \ref{tb:scalability}, compared with \texttt{OurI}, \texttt{OurBI} has less traversed vertices ($V^+$), as some repeated traversed vertices can be avoided; also, \texttt{OurBI} has less number of updated labels ($\nlb$), as the number of relabel process can be reduced. 
However, compared with \texttt{OurI}, \texttt{OurBI} may add much more vertices into priority queue $Q$, which cost more running time of enqueue and dequeue. Also, \texttt{OurBI} may require several times of repeated rounds ($\nrp$), which may cost extra running time. 
Typically, this extra running time is acceptable as most of the edges can be inserted in the first round, e.g., for \emph{dbpedia} with $6.4$M sampled edges, the number of batch inserted edges is 100000, 1555, 12, and 0  in four rounds, respectively. 
This is why \texttt{OurBI} sometimes runs faster but sometimes slower compared with \texttt{OurI}.

- In Figure \ref{fig:scala}, we observe that \texttt{OurR} is always faster than \texttt{OurI}. The reason is as follow. From Table \ref{tb:scalability}, compared with \texttt{OurI}, \texttt{OurR} has less number of traversed vertices ($V^*$), as \texttt{OurR} has $V^*=V^+$; \texttt{OurR} has less number of updated labels ($\nlb$), as vertices are removed from $O_{K}$ and then appended after $O_{K-1}$ and thus the relabel process is not always triggered.

\begin{figure}[htb]
\centering
\includegraphics[scale=0.4]{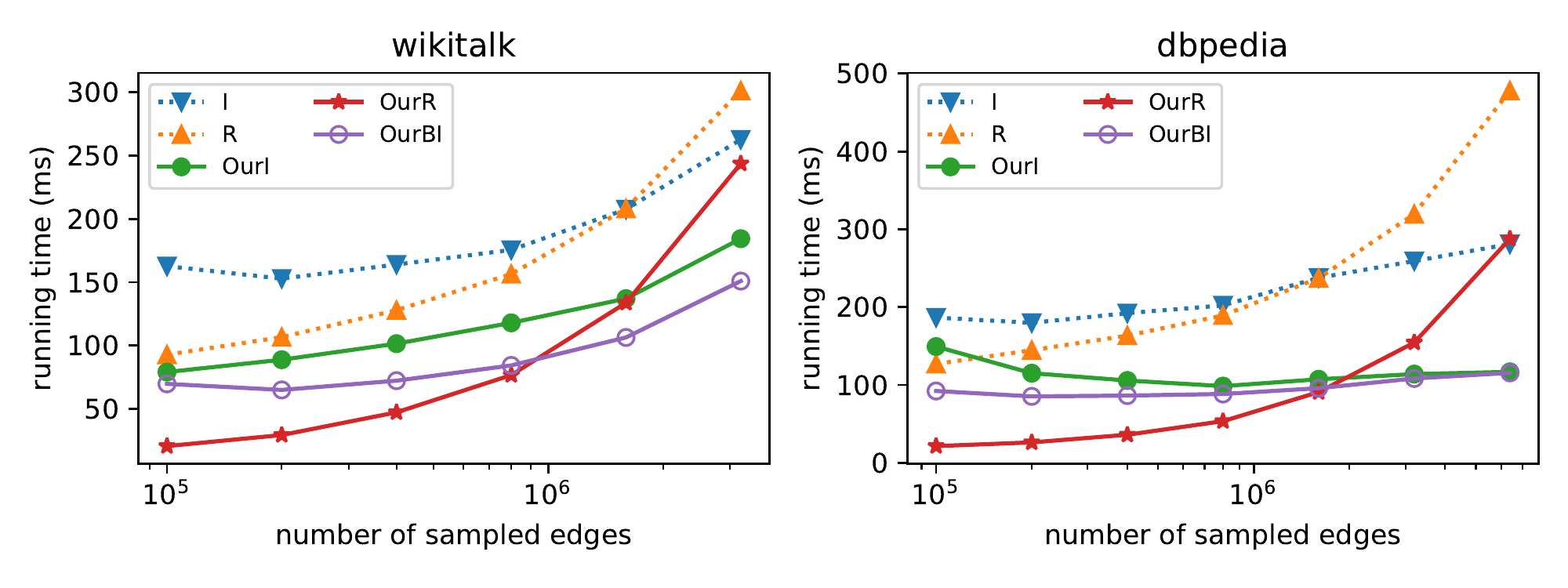}
\centering
\caption{The scalability of all methods over selected graphs.}
\label{fig:scala}
\end{figure}

\iftrue
\begin{table}[!htb]
\caption{The details of scalability evaluation by varying the number of sampled edges over \emph{wikitalk} and \emph{dbpedia}.}
\label{tb:scalability}
\centering
\footnotesize
\begin{tabular}{r|rrr|rrrr|rr}
\toprule
       &\multicolumn{3}{c}{\texttt{OurI}} & \multicolumn{4}{|c}{\texttt{OurBI}} & \multicolumn{2}{|c}{\texttt{OurR}}  \\
 $m'$ & $|V^*|$ & $|V^+|$ & \nlb & $|V^*|$ & $|V^+|$ & \nlb & \nrp & $|V^*|$ & \nlb \\ 
\midrule
0.1M & 107K & 131K & 1.5M & 107K & 123K  & 116K & 10 & 107K & 107K \\
0.2M & 101K & 118K & 1.4M & 101K & 109K  & 111K & 11 & 101K & 101K \\
0.4M & 101K & 116K & 1.3M & 101K & 107K  & 107K & 9  & 101K & 101K \\
0.8M & 101K & 113K & 1.3M & 101K & 106K  & 104K & 10 & 101K & 101K \\
1.6M & 100K & 110K & 1.2M & 100K & 106K  & 103K & 11 & 100K & 100K \\
3.2M & 101K & 110K & 1.1M & 101K & 106K  & 103K & 9  & 101K & 101K \\
\midrule

0.1M & 146K & 149K & 2.6M & 146K & 151K & 149K & 4 & 146K & 146K \\
0.2M & 129K & 136K & 2.1M & 129K & 136K & 132K & 3 & 129K & 129K \\
0.4M & 117K & 131K & 1.8M & 117K & 130K & 124K & 3 & 117K & 117K \\
0.8M & 109K & 127K & 1.6M & 109K & 126K & 118K & 3 & 109K & 109K \\
1.6M & 105K & 127K & 1.4M & 105K & 125K & 116K & 4 & 105K &  105K \\
3.2M & 102K & 124K & 1.3M & 102K & 122K & 113K & 4 & 102K &  102K \\
6.4M & 100K & 124K & 1.2M & 100K & 122K & 112K & 4 & 100K &  100K \\
\bottomrule
\end{tabular}
\end{table}
\fi

\section{Conclusion And Future Work}
In this work, we study maintaining the $k$-core of graphs when inserting or removing edges. We simplify the state-of-the-art core maintenance algorithm and also improve its worst-case time complexity by introducing the classical Order Data Structure. Our simplified approach is easy to understand, implement, and argue the correctness. 
The experiments show that our approach significantly outperforms the existing methods. 

In future work, we can parallelize our approach to run on multi-core machines. The key issue is how to implement the parallel Order Data Structure efficiently. In particular, we can apply the core maintenance to large data graphs with billions of vertices like social networks. 


\bibliographystyle{ACM-Reference-Format}
\bibliography{references}
\end{document}